\newtheorem{definition}{Definition} 
\newtheorem{example}{Example} 
\newtheorem{theorem}{Theorem} 
\newtheorem{lemma}{Lemma} 
\newtheorem{proof}{Proof} 
\begin{document}
\bibliographystyle{acmtrans}

\title{Integrating Interval Constraints\\
       into Logic Programming}

\date{}

\author{M.H. van Emden\\
        Department of Computer Science\\
        University of Victoria \\
        Research report DCS-133-IR
       }


\newcommand{\cpp}{\hbox{{\tt C++}}}

\newcommand{\Rat}{\ensuremath{\mbox{\textbf{Q}}}} 
\newcommand{\Rea}{\ensuremath{\mathcal{R}}} 
\newcommand{\ExtRe}{\ensuremath{\mbox{\textbf{R}}^{++}}} 
\newcommand{\Flpt}{\ensuremath{\mbox{\textbf{F}}}} 

\newcommand{\A}{\ensuremath{\mathcal{A}}} 
\newcommand{\Bool}{\ensuremath{\mathcal{B}}} 
\newcommand{\C}{\ensuremath{\mathcal{C}}} 
\newcommand{\D}{\ensuremath{\textbf{D}}} 
\newcommand{\DE}{\ensuremath{\mathcal{DE}}} 
\newcommand{\E}{\ensuremath{\mathcal{E}}} 
\newcommand{\F}{\ensuremath{\textbf{F}}} 
\newcommand{\Herb}{\ensuremath{\mathcal{H}}} 
\newcommand{\I}{\ensuremath{\mathcal{I}}} 
\newcommand{\Nat}{\ensuremath{\mathcal{N}}} 
\newcommand{\M}{\ensuremath{\textbf{M}}} 
\newcommand{\MI}{\ensuremath{\mathcal{M}_I}} 
\newcommand{\Pred}{\ensuremath{\textbf{P}}} 
\newcommand{\R}{\ensuremath{\mathcal{R}}} 
\newcommand{\Rz}{\ensuremath{\textbf{R}}} 
\newcommand{\Sch}{\ensuremath{\mathcal{S}}} 
\newcommand{\T}{\ensuremath{\textbf{T}}} 
\newcommand{\Var}{\ensuremath{\textbf{V}}} 
\newcommand{\Vars}{\mathcal{X}}
\newcommand{\Z}{\ensuremath{\mathcal{Z}}} 

\newcommand{\cart}{\ensuremath{\mbox{\textsc{cart}}}} 
\newcommand{\apl}{\ensuremath{\mbox{\textsc{apl}}}} 
\newcommand{\real}{\ensuremath{\mbox{\textsc{real}}}} 
\newcommand{\preal}{\ensuremath{\mbox{\textsc{preal}}}} 
\newcommand{\tit}{\ensuremath{\mbox{\textit{true}}}} 
\newcommand{\tbf}{\ensuremath{\mbox{\textbf{true}}}} 
\newcommand{\fit}{\ensuremath{\mbox{\textit{false}}}} 
\newcommand{\fbf}{\ensuremath{\mbox{\textbf{false}}}} 

\newcommand{\va}{{\tt a}}
\newcommand{\vb}{{\tt b}}
\newcommand{\vfa}{{\tt f(a)}}
\newcommand{\vfb}{{\tt f(b)}}

\newcommand{\pair}[2]{\ensuremath{\langle #1,#2 \rangle}}
\newcommand{\triple}[3]{\ensuremath{\langle #1,#2,#3 \rangle}}
\newcommand{\vc}[2]{\ensuremath{#1_0,\ldots,#1_{#2-1}}}

\newcommand{\bnr}{{\scshape bnr}}
\newcommand{\chip}{{\scshape chip}}
\newcommand{\ichip}{{\scshape ichip}}
\newcommand{\pr}{{\scshape clp/ncsp}}
\newcommand{\clpdc}{{\scshape clp/dc}}
\newcommand{\clpncsp}{{\scshape clp/ncsp}}
\newcommand{\clp}{{\scshape clp}}
\newcommand{\clpr}{{\scshape clp(r)}}
\newcommand{\csp}{{\scshape csp}}
\newcommand{\csps}{{\scshape csp}s}
\newcommand{\dc}{{\scshape dc}}
\newcommand{\sld}{{\scshape sld}}
\newcommand{\re}{{\ttfamily re}}
\newcommand{\dec}{{\ttfamily dec}}
\newcommand{\fpi}{{\ttfamily fpi}}
\newcommand{\intv}[2]{{\ttfamily [}#1,#2{\ttfamily ]}}

\newcommand{\sol}{\ensuremath{\mbox{\textsc{sol}}}}

\newcommand{\ex}{\ensuremath{\mbox{Ex}}}

\newcommand{\para}{\vspace{0.05in}}

\newcommand{\solved}{\ensuremath{\mbox{{\tt solved}}}}
\newcommand{\solvedd}{\ensuremath{\mbox{{\tt solved1}}}}
\newcommand{\val}{\ensuremath{\mbox{{\tt val}}}}
\newcommand{\inn}{\ensuremath{\mbox{{\tt in}}}}

\newcommand{\subr}{\rightarrow_r}
\newcommand{\subi}{\rightarrow_i}
\newcommand{\subc}{\rightarrow_c}
\newcommand{\subs}{\rightarrow_s}
\newcommand{\fail}{\hbox{ {\it fail} }}
\newcommand{\infer}{\hbox{{\it infer}}}
\newcommand{\consistent}{\hbox{{\it consistent}}}
\newcommand{\ap}{\hbox{{\it ap}}}
\newcommand{\app}{\hbox{{\it ap}}^{\prime}}

\newcommand{\df}{\mathrm{def}}
\newcommand{\abs}{\mathrm{abs}}
\newcommand{\dro}{{\scshape dro}}
\newcommand{\dros}{{\scshape dro}s}

\maketitle

\abstract{
The \clp\ scheme uses Horn clauses and \sld\ resolution to generate
multiple constraint satisfaction problems (\csps).
The possible \csps\ include rational trees (giving Prolog)
and numerical algorithms for solving
linear equations and linear programs (giving \clpr).
In this paper we develop a form of \csp\ for interval constraints.
In this way one obtains a logic semantics
for the efficient floating-point hardware
that is available on most computers.

The need for the method arises
because in the practice of scheduling and engineering design
it is not enough to solve a single \csp.
Ideally one should be able to consider thousands of \csps\
and efficiently solve them or show them to be unsolvable.
This is what \pr, the new subscheme of \clp\ described in this paper
is designed to do.
}

\section{Introduction}

Floating-point arithmetic is marvelously cheap,
and it works most of the time.
Many textbooks on numerical analysis
contain examples of how spectacularly, or insidiously,
it can go wrong when it does \emph{not} work.
It would seem that in a mature computing technology
there is only place for reliable techniques.
Yet floating-point arithmetic is not to be lightly dismissed:
it is one of the main beneficiaries
of the enormous increase in processor performance
of the last few decades.
In combination with the insatiable demand
for more computationally intensive mathematical modeling,
this gives every motivation to use
interval arithmetic \cite{mrkrcl09}
as a way to safely use the dangerous technology
that is floating-point arithmetic.

Interval arithmetic ensures that,
in spite of the errors inherent in floating-point arithmetic,
a computation can be interpreted as a proof
that the real-valued result is contained
in the (interval) result of the computation.
However, the correctness provided by interval arithmetic
is limited to the evaluation of a single expression;
it does not extend to the algorithm
in which such evaluations take place.
To ensure correctness of the way an algorithm
combines expression evaluations one could of course use
verification methods for imperative programs,
such as Floyd's assertions.
In this paper we consider the alternative
of replacing the algorithms by logic programs,
thus allowing programs to be read as specifications.

Logic programming is more
than just an alternative to Floyd's assertions.
The logic framework suggests
a relational form for interval computation.
Such a relational form is provided
by interval \emph{constraints} \cite{bnldr97,bmcv94},
an improvement to interval arithmetic itself.
Incorporating interval constraints into logic programming
has the added advantage that the result goes beyond
the constraint processing paradigm by yielding programs
that \emph{generate}
multiple constraint satisfaction problems
in addition to solving them.
In scheduling and in engineering design it is typically the case
that one has an entire search space of such problems.
\pr, the integrated system described in this paper,
generates such search spaces.
Solving is not only used for obtaining results,
but also for pruning the search spaces by inducing early failure.

In Section~\ref{review} we start at the logic end
with a review of the \clp\ scheme \cite{jls87,jaffar-survey}. 
We use Clark's method \cite{clrk91}
for the semantics of logic programming schemes.
As this method uses a mild form of algebraic logic,
it needs some introduction;
this happens in Section~\ref{logPrel}.
In Section~\ref{sec:csp} we start at the opposite end
with a suitably modified version of the main features
of the Constraint Processing framework (\csp).
To bring together the two established
constraint approaches of the literature
we develop in Section~\ref{domain-constraints}
what we call here the \dc\ subscheme of the \clp\ scheme.
The integration of interval constraints
(reviewed in Section~\ref{sec:intervConstr})
into logic programming is described in Section~\ref{sec:clpncsp}.

\section{Related work}

The pioneering work in constraint logic programming is \cite{vnh89},
implemented as \chip\ \cite{dvhsagb88}.
Prolog has \sld\ resolution as sole inference rule;
\cite{vnh89} added Forward Checking, Look-Ahead,
and Partial Look-Ahead as additional inference rules,
to be applied to goals, depending how they are declared.

\chip\ was restricted to finite domains.
\ichip\ \cite{leevnmd93} proposed extending \chip\
to include floating-point intervals as domains
for real-valued variables.
Descendants of \chip\, such as the Eclipse system
(see \cite{ptwllc06} for a recent description),
implemented floating-point intervals.

The earliest design for integrating interval arithmetic
into Prolog is Cleary's \cite{clr87},
which served as basis for \bnr-Prolog \cite{BNR88,bnldr97}.
Cleary's proposal of a ``logical arithmetic'' for Prolog
described an implementation, but not a logical semantics.
His paper and \cite{dvs87} are the first to describe relational,
rather than functional,
interval arithmetic.
It remains to be seen whether the mathematical model
given by Older and Vellino \cite{ldrvll93}
can be connected to logic.
\bnr\ Prolog and Prolog IV \cite{colm96} are mentioned here
because of their connection with Prolog,
but not because of connection with logic programming.

The \clp\ scheme \cite{jls87} gives a logical semantics
that combines pure Prolog with constraint solving.
This scheme supersedes \chip\ and its descendants
as it is both simpler and more general.

The \clp\ scheme served as the basis
for the \clpr\ system \cite{jmsy92}.
It uses the scheme to generate answers
to numerical problems in the form of ``active constraints''.
In the derivation of these,
floating-point arithmetic is used without due precaution,
so that the validity of answers
is lost through rounding errors.

In \cite{vnmd97} it was shown
that the \clp\ scheme is general enough to
accommodate both interval and finite-domain constraints.
It does this by introducing ``value constraints''
without suggesting any way of interfacing these with
intervals.
This is done in this paper by means by means of \clpdc,
the \dc\ subscheme of the \clp\ scheme.
In this way we obtain \clpncsp, the first logic programming language
(as distinct from extension of Prolog)
with real variables in which only the precision,
but not the validity, of answers is affected by rounding errors.

\section{Logic Preliminaries} \label{logPrel}

\subsection{Relations}
Relations play a central role
in the integration of interval constraints into logic programming:
both constraints and the meanings of logic predicates
are relations.
Here we do not attempt to define
relations as generally as possible:
we only strive for adequacy for the purpose of this paper.
For a more drastic generalization of the usual notion
of relation, see \cite{vnmdSTPCS}.

As usually defined,
a relation is a subset of a Cartesian product
$S_1 \times \cdots \times S_k$.
That is, it consists of tuples
$\langle a_1, \ldots, a_k \rangle$
with $a_i \in S_i$ for $i = 1,\ldots,k$.
Such tuples are indexed
by the integers $1,\ldots, k$.
In the following we will need such relations,
as for example the ternary relation
$\mbox{sum } = \{\langle x,y,z \rangle \in \R^3 \mid x+y = z \}$,
indexed by the set $\{1,2,3\}$.

But we will also need relations
consisting of tuples indexed
by variables instead of integers.
For example, the constraint written
as $\mbox{sum}(x_2,x_2,x_1)$
is intended to be a relation distinct for the sum relation
just mentioned.
As another example,
$\mbox{sum}(x_2,x_2,x_1) \wedge \mbox{sum}(x_3,x_4,x_1)$
is intended to be a relation.
If so, which set of tuples? How indexed?

In this section we introduce the suitable type of relation;
in Section~\ref{subsec:denotations} we define
how they arise as meaning of the constraint expressions just shown.

\begin{definition}\label{def:relOn}
Given a set $\Vars = \{x_1, \ldots, x_N\}$ of variables,
a relation $\rho \subset S_1\times \cdots \times S_n$
consisting of tuples indexed by $\{1,\ldots,n\}$
and a sequence $\langle v_1, \ldots, v_n\rangle $ of variables
(not necessarily distinct),
the \underline{relation $\rho$ on $v$}
is the set of tuples $\tau$
indexed by the set of the $k \leq n$ variables in $v$
such that $\tau(v_i) = t_i$ for a tuple $t \in \rho$,
for all $i = 1,\ldots,n$. 
\end{definition}

\begin{example}
Let $\rho$ be the ternary sum relation
over the set $\mathcal{N}$ of natural numbers,
$\Vars = \{x_1, \ldots, x_{100}\}$, and $v = \langle x_2,x_2,x_1 \rangle$.
Then we have as example of a tuple $\tau$
in the relation $\rho$ on $v$:
\begin{eqnarray*}
\tau(v_1) = & \tau(x_2) & = t_1   \\
\tau(v_2) = & \tau(x_2) & = t_2   \\
\tau(v_3) = & \tau(x_1) & = t_3
\end{eqnarray*}
For such a $\tau$ to exist,
only tuples $t \in \rho$ qualify
where the first two elements are equal to each other.
$\tau$ consists of tuples indexed by the set $\{x_1, x_2\}$.
The tabular form of $\tau$ is as shown in Table~\ref{tab:taurelation}.
\begin{table} 
\begin{tabular}{r|r|r||r|r|r|r|r}
$\tau$  &  $x_1$ & $x_2$ & $\rho_1 \Join \rho_2$
                            &$x_1$ & $x_2$ & $x_3$ & $x_4$   \\
\hline 
        &   0    &  0 &     &0    &  0    &  0    &  0    \\
        &   2    &  1 &     &2    &  1    &  0    &  2   \\
        &   4    &  2 &     &2    &  1    &  1    &  1    \\
        &   6    &  3 &     &2    &  1    &  2    &  0   \\
        &   8    &  4 &     &4    &  2    &  0    &  4 \\
        &  10    &  5 &     &4    &  2    &  1    &  3 \\
        &\ldots&\ldots&     &\ldots&\ldots &\ldots &  \ldots    \\
\end{tabular}
\caption{On the left, tabular form of the relation $sum$ on
$\langle x_2,x_2,x_1,\rangle$ where
$sum = \{\langle x,y,z \rangle \in \R^3 \mid x+y = z \}$.
On the right,
tabular form of $\rho_1 \Join \rho_2$ from Example~\ref{ex:join}.
\label{tab:taurelation} 
}
\end{table}
\end{example}

\begin{definition}\label{def:join}
Let $\rho_1$ ($\rho_2$) be a relation
in which the tuples are indexed by
a set $\Vars_1$ ($\Vars_2$) of variables.
The \underline{join} of
$\rho_1$  and $\rho_2$ is
written as $\rho_1 \Join\ \rho_2$
and is defined as the relation
in which the tuples $\tau$ are indexed by 
$\Vars_1 \cup \Vars_2$ and are such that there exists,
for each tuple
$\tau \in (\rho_1 \Join\ \rho_2)$,
tuples
$\tau_1 \in \rho_1$
and
$\tau_2 \in \rho_2$
exist such that
$\tau(x) = \tau_1(x)$ if $x \in \Vars_1$
and
$\tau(x) = \tau_2(x)$ if $x \in \Vars_2$.
(Note that this implies that
$\tau_1$ and $\tau_2$
have to be such that $\tau_1(x) = \tau_2(x)$
for all $x$ such that $x \in \Vars1 \cap \Vars_2$.)
\end{definition}

\begin{example}\label{ex:join}
Let $\rho$ be the ternary sum relation
over the set $\mathcal{N}$ of natural numbers,
$\Vars = \{x_1, \ldots, x_{100}\}$,
$v_1 = \langle x_2,x_2,x_1 \rangle$, and
$v_2 = \langle x_3,x_4,x_1 \rangle$.
Let 
$\rho_1$ be $\rho$ on $v_1$
and
$\rho_2$ be $\rho$ on $v_2$.
Then $\rho_1 \Join\ \rho_2$ is a relation of which some tuples
are shown in Table~\ref{tab:taurelation}.
\end{example}

\subsection{Language}

The vocabulary of logic is formalized as a \emph{signature}
$\Sigma = \langle P,F,V \rangle$,
a tuple of disjoint, countably infinite, sets of
\emph{predicates},
\emph{functors}, and
\emph{variables}.
$P$ is partitioned according to whether
it may occur in a constraint or in a program.
Thus we have ``constraint predicates'' and
``program predicates''.
The constraint predicates include the nullary
\emph{true} and
\emph{false} and the binary $=$.

A \emph{term} is a variable or an expression
of the form $f(t_0,\ldots,t_{k-1})$,
where $f \in F$ and $t_0,\ldots,t_{k-1}$ are terms.
If $k = 0$, then the term is a \emph{constant}.

An \emph{atom} (or \emph{atomic formula})
is an expression of the form
$p(t_0,\ldots,t_{k-1})$,
where $p \in P$ is a predicate and
$t_0,\ldots,t_{k-1}$ are terms.
If $p$ is a program (constraint) predicate,
then an atom with $p$ as predicate is
a program (constraint) atom.

A goal statement is a conjunction of program atoms or constraint atoms.
A constraint is a conjunction of constraint atoms.

\subsection{Interpretations}
Interpretations depend on a language's signature.
They are formalized as $\Sigma$-\emph{structures}
$ \mathcal{I} = \langle \D,\Pred,\F \rangle $ where
\begin{itemize}
\item
\D\ is a non-empty set called the \emph{domain} of the interpretation.
\item
\Pred\ is a function mapping every $k$-ary predicate in $P$
to a subset of $\D^k$.
\Pred\ maps \tit\ to $\tbf = \{\langle \rangle\}$,
\fit\ to $\fbf = \{\}$, and
$=$ to $\{\pair{a}{a}\mid a \in \D\}$.
\item
\F\ is a function mapping every functor $f$ in $F$
to a function mapping each $k$-ary functor in $F$
to a k-adic function in $\D^k \rightarrow \D$.
\end{itemize}

\subsection{Denotations}\label{subsec:denotations}
An interpretation $ \langle \D,\F,\Pred \rangle $
determines a function \M\ mapping variable-free terms
to their denotations, as follows:
\begin{itemize}
\item
$\M(t) = \F(t) \in \D$ if $t$ is a constant
\item
$\M(f(t_0,\ldots,t_{k-1})) = (\F(f))(\M(t_0),\ldots,\M(t_{k-1}))$
for $k > 0$
\item
A ground atom
$p(t_0,\ldots,t_{k-1})$
is \tbf\ in an interpretation iff \\
$\langle \M(t_0),\ldots,\M(t_{k-1})\rangle \in \Pred(p) $
We give denotations of non-atomic formulas later, via relations.
\end{itemize}

We now consider denotations of terms and atoms that contain variables.
Let \A\ be an \emph{assignment},
which is a function in $V \rightarrow \D$,
assigning an individual in $\D$ to every variable.
(In other words, \A\ is a tuple of elements of $\D$
indexed by $V$).
As denotations of formulas with free variables
depend on \A\ , we write $\M_\A$.

\begin{itemize}
\item
$\M_\A(t) = \A(t)$ if $t \in V$
\item
$\M_\A(f(t_0,\ldots,t_{k-1})) =\\
      (\F(f))(\M_\A(t_0),\ldots,\M_\A(t_{k-1}))$
for $k > 0$.
\item
An atom
$p(t_0,\ldots,t_{k-1})$
is \tbf\ in an interpretation iff
$$ \langle \M_\A(t_0),\ldots,\M_\A(t_{k-1})\rangle
   \in \Pred(p)
$$
We give denotations of non-atomic formulas later, via relations.
\end{itemize}
The existential closure
$\exists x_0,\ldots,x_{n-1}$
of a set $C$ of atoms is true
in an interpretation
iff there is an assignment $\A$
such that $\M_\A(A) = \tbf$
for every atom $A \in C$.

\begin{definition}\label{def:bareDenot}
Let $\Vars_C \subset V$ be the set of the free variables in formula $C$. 
$R(C)$, the \emph{relation denoted by} $C$,
given the interpretation determining $\M_\A$, is defined as
$$
R(C) =  \{  \A \downarrow \Vars_C \mid
        \A \mbox{ is an assignment and }
        \M_\A(C) = \tbf
\}.
$$
\end{definition}
By $\A \downarrow \Vars_C$ we mean the function
$\A: V \rightarrow \D$ restricted to arguments in $\Vars_C \subset V$.

Thus $R(C)$ consists of tuples indexed by variables.
$R$ allows us to translate between
algebraic expressions in terms of relations
and formulas of logic.
This is useful because the results in
constraint satisfaction problems are expressed
in terms of relations,
whereas the constraint logic programming scheme
is expressed in terms of first-order predicate logic.

We have of course
$R(\tit) = \M(\tit) = \tbf$;
also
$R(\fit) = \M(\fit) = \fbf$.
More interestingly,
we may have
$R(C_1 \wedge C_2) = R(C_1) \cap R(C_2)$
and
$R(C_1 \vee C_2) = R(C_1) \cup R(C_2)$.
But these hold only when $C_1$ and $C_2$
have the same set of variables.
As this is not always the case,
we also need to define $R(Z,C)$,
where $Z$ is a set $\{z_1, \ldots z_N\}$ 
of variables \emph{containing}
$\Vars_C$, the set of the free variables of $C$:
\begin{definition}\label{def:looseDenot}
$$ R(Z,C) = \{\A \downarrow Z
      \mid \A \mbox{ is an assignment and } \M_\A(C) = \tbf
            \}.
$$
\end{definition}
Definitions
(\ref{def:bareDenot})
and
(\ref{def:looseDenot})
were suggested by a similar device first brought to our attention by
\cite{clrk91}.
The version here is modified to allow translations of a wider class of
formulas.
Their advantage is that of simplicity compared to other systems
of algebraic logic such as \cite{hmt85}.

\begin{eqnarray*}
R(\Vars_C,C)         &=& R(C)                    \\
R(\Vars_{C_1} \cup \Vars_{C_2}, C_1 \wedge C_2)
     &=& R(\Vars_{C_1} \cup \Vars_{C_2},C_1)
         \cap
         R(\Vars_{C_1} \cup \Vars_{C_2},C_2)  \\
R(\Vars_{C_1} \cup \Vars_{C_2}, C_1 \vee C_2)
      &=& R(\Vars_{C_1} \cup \Vars_{C_2},C_1)
         \cup
          R(\Vars_{C_1} \cup \Vars_{C_2},C_2).
\end{eqnarray*}
To be able to interface the CLP scheme, expressed in terms of predicate
logic formulas with CSPs, expressed in terms of relations,
we will use the following lemma.
\begin{lemma}
$R(\Vars_{C_1} \cup \Vars_{C_2}, C_1 \wedge C_2) = R(C_1) \Join R(C_2)$.
\end{lemma}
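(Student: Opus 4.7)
The plan is to prove set equality by showing both inclusions, after first observing that the two sides agree on their index set: on the left, Definition~\ref{def:looseDenot} produces tuples indexed by $\Vars_{C_1} \cup \Vars_{C_2}$, and on the right, Definition~\ref{def:join} produces tuples indexed by exactly the same union. So it only remains to match the underlying sets of tuples.

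For the $\subseteq$ direction I would take $\tau$ in the left-hand side, so $\tau = \A \downarrow (\Vars_{C_1} \cup \Vars_{C_2})$ for some assignment $\A$ with $\M_\A(C_1 \wedge C_2) = \tbf$. Using the standard semantics of conjunction, this splits into $\M_\A(C_1) = \tbf$ and $\M_\A(C_2) = \tbf$. Setting $\tau_i = \A \downarrow \Vars_{C_i}$ for $i = 1, 2$, Definition~\ref{def:bareDenot} gives $\tau_i \in R(C_i)$, and by construction $\tau(x) = \tau_1(x)$ for $x \in \Vars_{C_1}$ and $\tau(x) = \tau_2(x)$ for $x \in \Vars_{C_2}$. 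Definition~\ref{def:join} then places $\tau$ in $R(C_1) \Join R(C_2)$.

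For the $\supseteq$ direction I would take $\tau \in R(C_1) \Join R(C_2)$ with witnesses $\tau_1 \in R(C_1)$ and $\tau_2 \in R(C_2)$ that agree on $\Vars_{C_1} \cap \Vars_{C_2}$. Each $\tau_i$ arises from some assignment $\A_i$ with $\M_{\A_i}(C_i) = \tbf$ and $\A_i \downarrow \Vars_{C_i} = \tau_i$. I would then define a single assignment $\A$ by $\A(x) = \tau(x)$ for $x \in \Vars_{C_1} \cup \Vars_{C_2}$ and $\A(x)$ arbitrary (say $\A_1(x)$) otherwise, and conclude $\M_\A(C_1) = \tbf$ and $\M_\A(C_2) = \tbf$, hence $\M_\A(C_1 \wedge C_2) = \tbf$ and $\tau = \A \downarrow (\Vars_{C_1} \cup \Vars_{C_2})$ lies in the left-hand side.

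The main obstacle is justifying the last sentence: it relies on a \emph{coincidence lemma}, namely that $\M_\A(C)$ depends only on $\A \downarrow \Vars_C$. This is not stated explicitly in the excerpt, so I would insert a brief argument by structural induction on terms, atoms, and (conjunctive) formulas from the clauses defining $\M_\A$ in Section~\ref{subsec:denotations}. Once that lemma is in hand, the agreement of $\A$ with $\A_1$ on $\Vars_{C_1}$ and with $\A_2$ on $\Vars_{C_2}$ transfers the truth of $C_1$ and $C_2$ from $\A_1$ and $\A_2$ to $\A$, closing the proof.
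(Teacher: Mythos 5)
The paper states this lemma without any proof, so there is no argument of the author's to compare yours against; your proposal supplies the missing proof, and it is correct. Both inclusions are handled properly: restricting a witnessing assignment $\A$ to $\Vars_{C_1}$ and $\Vars_{C_2}$ gives the join witnesses, and conversely patching $\A_1$ and $\A_2$ together along $\tau$ gives an assignment satisfying the conjunction. You were also right to flag the one point the paper leaves implicit, namely the coincidence property that $\M_\A(C)$ depends only on $\A \downarrow \Vars_C$: the paper defines $\M_\A$ only on terms and atoms (non-atomic formulas are handled ``via relations''), so your structural induction need only run over terms and atoms, and since a constraint here is a conjunction of constraint atoms, ``$\M_\A(C_1 \wedge C_2) = \tbf$'' just means every atom of $C_1$ and of $C_2$ is true under $\A$, which makes both your conjunction-splitting step and the patching step immediate. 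An alternative, slightly shorter route would have been to start from the paper's displayed (but likewise unproved) identity $R(\Vars_{C_1} \cup \Vars_{C_2}, C_1 \wedge C_2) = R(\Vars_{C_1} \cup \Vars_{C_2}, C_1) \cap R(\Vars_{C_1} \cup \Vars_{C_2}, C_2)$ and then check that intersecting the two cylindrified relations is exactly the join of $R(C_1)$ and $R(C_2)$; your direct argument from Definitions~\ref{def:bareDenot}, \ref{def:looseDenot} and \ref{def:join} proves the same thing without relying on that intermediate claim, which is preferable given that it too lacks a proof in the paper.
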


\subsection{Logical implication}

In the usual formulation of first-order predicate logic
we find the notation $T \models S$ for the sentence $S$
being logically implied by sentence $T$,
where ``sentence'' means closed formula.
The meaning of the implication is that $S$ is true in all models of $T$.
The denotations just defined allow logical implication to be generalized
to apply to formulas that have free variables \cite{clrk91}:

\begin{definition}\label{def:models}
Let $S$ and $T$ be formulas and let $Z$ be the set of variables
occurring in them.
Then we write $T \models S$ to mean
that in all interpretations $R(Z,S) \subset R(Z,T)$.
Likewise, $T \models R(Z,S) \subset R(Z,T)$
means that $R(Z,S) \subset R(Z,T)$ holds in all models of $T$.
\end{definition}

\section{Review of the CLP scheme}
\label{review}

The \clp\ scheme is based on the observation
that in logic programming the Herbrand base
can be replaced by any of many other semantic domains.
Hence the scheme
has as parameter a tuple
$\langle \Sigma, \cal I, L, T\rangle $, where
$\Sigma$ is a signature, $\cal I$ is a $\Sigma$-structure, $\cal L$ is a
class of $\Sigma$-formulas, and $\cal T$ is a first-order $\Sigma$-theory.
These components play the following roles.
$\Sigma$ determines the relations
and functions that can occur in constraints.
$\cal I$ is the structure over which computations are performed
\footnote{
$\cal I$ is a structure consisting of a set $\D$ of values (the {\em
carrier} of the structure) together with relations and functions over $\D$
as specified by the signature $\Sigma$.
For example, there is a complete ordered field
that has $\R$, the set of real numbers, as carrier.
}
.
$\cal L$ is the class of constraints that can be expressed.
Finally, $\cal T$ axiomatizes properties of $\cal I$.

Derivations in the \clp\ scheme are defined by means of transitions
between states.
A state is defined as a tuple
$\langle G,A,P\rangle $ where the goal statement $G$
is a set of atoms and constraints
and $A$ and $P$ are sets of constraints\footnote{
We will often regard $A$ and $P$ as formulas. Then they are the
conjunctions of the atoms they contain.
}.
Together $A$ and $P$ form the {\em constraint store}.
The constraints in $A$ are called the {\em active constraints}\/; those in
$P$ the {\em passive constraints}.

The query $Q$ corresponds to the initial state
$\langle Q,\emptyset,\emptyset\rangle $.
A successful derivation is one that ends in a state of the form
$\langle \emptyset, A, P\rangle$.

The role of $A$ and $P$ in this formula is to describe the answer to the
query $Q$.
$A \wedge P$ is \clp's generalization of Prolog's answer substitution.
It describes an answer, if consistent.
Such an answer may not be useful,
as $P$ may still represent a difficult computational problem.
All that the derivation has done is to reduce
the program atoms to constraint atoms, directly or indirectly
via program atoms.
Derivations also transfer as much as possible
the computational burden of the passive constraints $P$
to the easily solvable active constraints $A$.

\subsection{Operational semantics}
A derivation is a sequence of states such that each next state is
obtained from the previous one by a {\em transition}.
There are four transitions,
$\subr$,
$\subc$,
$\subi$, and
$\subs$
:

\paragraph{1. The resolution transition $\subr\ $:}

$\langle G \cup \{a\},A,P\rangle  \subr
     \langle G \cup B, A, P \cup \{s_1=t_1,\ldots,s_n=t_n\}\rangle$
if $a$ is the atom selected out of $G \cup \{a\}$
by the computation rule, 
$h \leftarrow B$ is a rule of $\cal P$, renamed to new variables,
and $h=p(t_1,\ldots,t_n)$ and $a=p(s_1,\ldots,s_n)$.

$\langle G \cup \{a\}, A, P\rangle  \subr \fail$
is the transition that applies
if $a$ is the atom selected by the computation rule,
and, for every rule $h \leftarrow B$ in $\cal P$, $h$ and
$a$ have different predicate symbols.

\paragraph{2. The constraint transfer transition $\subc\ $:}

$\langle G \cup \{c\}, A, P\rangle  \subc
		    \langle G, A, P \cup \{c\}\rangle$
if constraint $c$ is selected by the computation rule.

\paragraph{3. The constraint store management transition $\subi\ $:}

$\langle G, A, P\rangle  \subi \langle G, A^\prime,
P^\prime\rangle$\\
if $\langle A^\prime, P^\prime\rangle = \infer(A,P)$.

\paragraph{4. The consistency test transition $\subs\ $:}

$\langle G, A, P\rangle  \subs \langle G, A, P\rangle$
if $A$ is consistent;\\
$\langle G, A, P\rangle  \subs \fail$ otherwise.

\subsection{Logic semantics}
For the logic semantics of the \clp\ scheme we follow \cite{clrk91}.

\begin{theorem}[soundness]\label{thm:soundness}
Whenever we have a successful derivation
from query $Q$ resulting in
$P$ and $A$ as passive and active constraints we have
${\cal P,T} \models R(\exists (P\wedge A)) \subset R(Q)$,
where the quantification is over the free variables in $P\wedge A$
that do not occur free in $Q$.
Note Definition~\ref{def:models} for ``$\models$''.
\end{theorem}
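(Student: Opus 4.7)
The plan is to establish a stronger invariant by induction on the length of the derivation. For any state $\langle G, A, P \rangle$ reachable from the initial state $\langle Q, \emptyset, \emptyset \rangle$, I claim that
$$
{\cal P},{\cal T} \models R(\exists(G \wedge A \wedge P)) \subset R(Q),
$$
where the existential binds those free variables of $G \wedge A \wedge P$ that do not appear in $Q$. The theorem then falls out as the special case of a successful derivation, where $G = \emptyset$ and the conjunction reduces to $A \wedge P$. The base case $G = Q$, $A = P = \emptyset$ is immediate, since both sides reduce to $R(Q)$ and no fresh variables have yet been introduced.

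For the inductive step I would dispatch the four transitions in turn. The constraint transfer $\subc$ merely relocates an atom from $G$ to $P$, so the conjunction $G \wedge A \wedge P$ is syntactically (hence semantically) unchanged. The successful branch of the consistency test $\subs$ leaves the state untouched. For the store-management transition $\subi$, soundness rests on a side condition, implicit in the scheme's definition of $\infer$, that ${\cal T} \models (A \wedge P) \leftrightarrow (A' \wedge P')$; granting this, the invariant is preserved trivially.

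The genuinely non-trivial case is resolution: $\langle G \cup \{a\}, A, P\rangle \subr \langle G \cup B, A, P \cup \{s_1 = t_1, \ldots, s_n = t_n\}\rangle$, where $h \leftarrow B$ is a freshly renamed rule of $\cal P$ with $h = p(t_1,\ldots,t_n)$ and $a = p(s_1,\ldots,s_n)$. Because the rule belongs to $\cal P$, we have ${\cal P} \models \forall(B \rightarrow h)$, and the new equalities force $h$ and $a$ to denote the same tuple under any satisfying assignment, so
$$
{\cal P} \models \bigl( B \wedge \textstyle\bigwedge_i s_i = t_i \bigr) \rightarrow \bigl( a \wedge \textstyle\bigwedge_i s_i = t_i \bigr).
$$
Conjoining with $G \wedge A \wedge P$ and invoking the preceding lemma, which translates $\wedge$ into $\Join$ on relations, shows that the relation denoted by the successor state is contained in the relation denoted by the predecessor state, modulo $\cal P$ and $\cal T$. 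Existentially projecting onto $\Vars_Q$ and composing with the inductive hypothesis delivers the invariant at the new state.

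The main obstacle will be the bookkeeping around the ever-growing set of free variables: each resolution introduces fresh variables from the renamed clause, which must be projected away to compare with $R(Q)$. I expect to need an auxiliary observation that existential projection is monotone on relations, i.e.\ $R_1 \subset R_2$ implies the same inclusion after quantifying out a common block of variables on both sides; together with Definition~\ref{def:looseDenot} for aligning variable sets of differing formulas, this reduces the inductive combination of the one-step inclusion with the hypothesis to routine transitive composition.
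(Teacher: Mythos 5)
Your induction is the right argument, and in fact it is the only argument available here: the paper itself gives no proof of Theorem~\ref{thm:soundness}, deferring to \cite{clrk91}, and your invariant --- that for every reachable state $\langle G,A,P\rangle$ one has ${\cal P},{\cal T} \models R(\exists(G\wedge A\wedge P)) \subset R(Q)$ with the quantifier over the variables not in $Q$ --- is exactly the standard Clark/Jaffar--Maher-style proof that the citation points to. Your treatment of the four transitions is correct: $\subc$ and the successful branch of $\subs$ are identity on the state formula; for $\subr$ the renamed rule gives ${\cal P}\models\forall(B\rightarrow h)$, the built-in interpretation of $=$ as identity turns $h$ into $a$ under the new equations, and monotonicity of existential projection plus the inductive hypothesis closes the step (the join lemma is a convenience here, not a necessity). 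The one place to be explicit rather than apologetic is the $\subi$ case: the paper never states any condition on \emph{infer}, so you should record as an explicit hypothesis of the theorem's proof that ${\cal P},{\cal T}\models R(\exists(A'\wedge P'))\subset R(\exists(A\wedge P))$ --- note that only this one direction, not the full equivalence you assume, is needed for soundness --- and observe that the \clpdc\ and \clpncsp\ instantiations satisfy it, since there \emph{infer} leaves $P$ unchanged and only narrows the domain predicates, so $d'_i(x)\rightarrow d_i(x)$ holds already in ${\cal T}_1$. With that hypothesis made explicit, your proof is complete and faithful to the intended semantics.
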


\begin{theorem}[completeness]\label{thm:completeness}
Let $Q$ be a query with variables $\Vars_Q$.
If ${\cal P,T} \models R(\Vars_Q, \Gamma) \subset R(Q)$
for a constraint atom $\Gamma$,
then there are $k$ successful derivations from $Q$
with answer constraints $\Gamma_1,\ldots,\Gamma_k$ such that
${\cal T} \models R(\Vars_Q, \Gamma)
     \subset R(\Vars_Q, \Gamma_1) \cup \cdots \cup R(\Vars_Q, \Gamma_k)$.
\end{theorem}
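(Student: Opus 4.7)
The plan is to follow the classical Jaffar--Lassez completeness strategy, adapted to the algebraic-logic framework of Section~\ref{logPrel}. The argument proceeds in three stages: a state invariant for intermediate configurations, a ``lifting'' step converting semantic satisfaction into syntactic derivations, and a compactness argument reducing a potentially infinite family of derivations to $k$.

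First I would establish a state invariant: for every state $\langle G, A, P\rangle$ reachable from $\langle Q, \emptyset, \emptyset\rangle$, the denotation $R(Z, \exists (A \wedge P \wedge G))$ is contained in $R(Z, Q)$ in every model of ${\cal P}\cup{\cal T}$, where $Z$ is the set of free variables of $Q$ and the existential closure binds the fresh variables introduced by resolution. This is proved by induction on derivation length, case-analysing the four transitions. Resolution $\subr$ consumes the hypothesis that ${\cal P}$ entails each of its clauses (together with the unification identities $s_i=t_i$ placed into $P$); the remaining transitions $\subc$, $\subi$, $\subs$ merely shift information between components without altering denotations. At a successful leaf $\langle \emptyset, A, P\rangle$ this invariant specialises to Theorem~\ref{thm:soundness}.

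For the completeness direction, fix any assignment $\A$ in $R(\Vars_Q,\Gamma)$; the hypothesis gives $\A \in R(\Vars_Q,Q)$ in every model of ${\cal P}\cup{\cal T}$. I would construct a derivation from $Q$ by repeatedly selecting, for the current program atom, a clause of ${\cal P}$ whose body is satisfied by some extension of $\A$ (such a clause must exist because $\A$ satisfies that atom in every model). The accumulated constraints constitute an answer $\Gamma_\A$ whose denotation contains $\A$, so $R(\Vars_Q,\Gamma)$ is covered by the union of $R(\Vars_Q,\Gamma_i)$ taken over all successful derivations from $Q$.

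The third and hardest step is to extract a finite subcover from this union. This is where the main obstacle lies: the CLP derivation tree rooted at $Q$ is typically infinite, so naively the cover uses infinitely many $\Gamma_i$. The plan is a K\"onig-style compactness argument exploiting the fact that $\Gamma$ lives over the finite variable set $\Vars_Q$ and that each $\Gamma_i$ is a constraint over that same set, so ``cover up to ${\cal T}$'' is a finitary condition determined by the quantifier-free consequences of ${\cal T}$. The delicate point is that the interaction between $\infer$ and $\consistent$ may produce semantically equivalent but syntactically distinct active constraints along different branches, so the extraction must identify such branches (appealing to a fair computation rule) and collapse them before counting, so that finitely many genuinely different answer denotations suffice to exhaust $R(\Vars_Q,\Gamma)$.
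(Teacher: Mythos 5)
The paper does not actually prove this theorem: it states it and writes ``For credits see \cite{clrk91}'', i.e.\ the result is imported from Clark's treatment of the \clp\ scheme (and ultimately from Jaffar--Lassez--Maher). So your attempt cannot be measured against an in-paper argument; on its own merits it has two genuine gaps. The first is in your lifting step. From the hypothesis that an assignment $\A$ satisfies the selected program atom \emph{in every model of} ${\cal P}\cup{\cal T}$ you cannot conclude that some clause of ${\cal P}$ has a body satisfied by an extension of $\A$: that inference is valid only in a canonical model (the least model of ${\cal P}$ over a model of ${\cal T}$, or equivalently the upward fixpoint construction), not in arbitrary models, and it is only the induction on the fixpoint rank in that canonical model that guarantees the derivation you build is \emph{finite}, i.e.\ actually reaches a state $\langle\emptyset,A,P\rangle$ and yields an answer constraint $\Gamma_\A$. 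As sketched, your clause-by-clause construction may regress forever and never produce any answer.

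The second gap is the finite-cover step. What your stage two would give (once repaired) is a pointwise cover, in one canonical model, of the solutions of $\Gamma$ by the answer constraints of a possibly infinite family $\{\Gamma_i\}_{i\in I}$ of successful derivations. The theorem asks for more: ${\cal T} \models R(\Vars_Q,\Gamma)\subset R(\Vars_Q,\Gamma_1)\cup\cdots\cup R(\Vars_Q,\Gamma_k)$, an inclusion holding in \emph{all} models of ${\cal T}$ with a \emph{finite} union. The standard route is compactness of first-order logic applied to ${\cal T}\cup\{\Gamma\}\cup\{\neg\,\exists\,\Gamma_i \mid i\in I\}$, reading the variables in $\Vars_Q$ as fresh constants (together with a hypothesis on ${\cal T}$, such as satisfaction completeness, which the cited treatments make explicit and this paper leaves implicit). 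A K\"onig-style argument on the derivation tree does not do this work: finiteness here is a property of the first-order theory ${\cal T}$, not of the tree, which need not be finitely branching or have any useful finiteness structure; and collapsing syntactically distinct but ${\cal T}$-equivalent answer constraints is irrelevant to extracting a finite subfamily. Fairness of the computation rule enters the completeness proof for a different reason (independence of the selection rule), not the one you give.
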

For credits see \cite{clrk91}.

\section{Constraint Satisfaction Problems}
\label{sec:csp}

Constraint Satisfaction Problems (\csps) can be defined as a framework
to cover a variety of specific situations, each exploiting an algorithmic
opportunity. For example, the \csp\ framework can be instantiated to
graph-colouring problems exploiting an efficient algorithm for the
all-different constraint based on matching in bipartite graphs. It
can also be instantiated to the solution of arithmetical constraints
over real-valued variables using efficient algorithms and hardware for
floating-point intervals. It is for this latter instantiation that we are
interested in \csps. But before describing it, first the general framework.

\subsection{CSPs according to Apt}

K. Apt was early in recognizing \cite{apt98a}
that \csps\ can be defined rigorously,
yet in such a way as to be widely applicable.
The following definition is distilled from \cite{apt98a,aptBook2003},
and uses his notation.
\begin{definition}\label{def:apt}
A \csp\ $\langle \mathcal{X},\mathcal{D},\mathcal{C}\rangle$
consists of a sequence
$\mathcal{X}  = \langle x_1, \ldots, x_n \rangle$
of variables,
a sequence
$\mathcal{D}  = \langle D_1, \ldots, D_n \rangle$
of sets called \emph{domains},
and a set
$\mathcal{C}  = \{ c_1, \ldots, c_k \}$
of constraints.
Each constraint is a constraint on a subsequence of $\mathcal{X}$.
An $n$-tuple 
$\langle d_1, \ldots, d_n \rangle \in D_1 \times \cdots \times D_n$
is a solution to 
$\langle \mathcal{X},\mathcal{D},\mathcal{C}\rangle$
iff for every $c \in \mathcal{C}$ on a sequence of variables
$\langle x_{i_1}, \ldots, x_{i_m} \rangle$ from $\mathcal{X}$
we have
$\langle d_{i_1}, \ldots, d_{i_m} \rangle \in c$.
\end{definition}
In this definition $\mathcal{X}$ is probably intended to consist
of $n$ \emph{different} variables.
Once that condition is assumed,
$\mathcal{X}$ need not be a sequence,
but can be a set without further qualification.

\begin{example}\label{ex:aptSol}
To see Apt's definition at work, consider the following example.
$\mathcal{X} = \langle x_1,x_2,x_3,x_4\rangle$,
$\mathcal{D} =
\langle \mathcal{N},\mathcal{N},\mathcal{N},\mathcal{N}\rangle$,
and
$\mathcal{C} = \{c_1,c_2\}$.
Constraints $c_1$ and $c_2$ are on
$\langle x_2,x_2,x_1\rangle$ and
$\langle x_3,x_4,x_1\rangle$, respectively.
To determine some of the solutions
we construct Table~\ref{tab:aptEx}.
\end{example}

\begin{table}
\begin{tabular}{c|c|c|c|c}

        & $x_1$ & $x_2$ & $x_3$ & $x_4$ \\ 
\hline 
$c_1$   &   0    &  0   &       &        \\
        &   2    &  1   &       &        \\
        &   4    &  2   &       &        \\
        &   6    &  3   &       &        \\
        &$\ldots$&$\ldots$&     &        \\
$c_2$   &   0    &      &   0   &   0    \\
        &   1    &      &   0   &   1    \\
        &   1    &      &   1   &   0    \\
        &   2    &      &   0   &   2    \\
        &   2    &      &   1   &   1    \\
        &   2    &      &   2   &   0    \\
        &        &$\ldots$&$\ldots$&$\ldots$\\
solutions &   0    &  0   &   0   &  0    \\
        &   2    &   1  &   0   &   2    \\
        &\ldots&$\ldots$&$\ldots$&$\ldots$\\
\end{tabular}
\caption{Table for Example~\ref{ex:aptSol}: illustrating solution
         according to Apt's definition.}
\label{tab:aptEx}
\end{table}

The table for $c_1$ is constructed according to the rule
$x_2+x_2=x_1$;
for $c_2$ the rule is
$x_3+x_4=x_1$.
In Definition~\ref{def:apt} a constraint
remains a black box:
there is no opportunity to specify a rule
according to which the tuples are constructed.
This omission can be a disadvantage,
as is seen in the important type of discrete \csp\
that can be viewed as a graph-colouring problem.
In practical applications such \csps\ have a small
domain, consisting of the ``colours''.
At the same time they have a large number of variables
and a large number of constraints,
both numbers running in the thousands.
Yet all these constraints have an important property
in common: they derive from the ``all different''
constraint that requires that no two of their arguments
have the same value.

The remedy for this problem was prepared by Definition~\ref{def:relOn},
which is used in our alternative Definition~\ref{def:cspVE} for \csp.
If the definition of \csp\ included a language for expressing constraints,
then these expressions would clarify the connection between $c_1$ and $c_2$.
For example,
$sum(x_2,x_2,x_1)$ would be a good expression for $c_1$ and
$sum(x_3,x_4,x_1)$ for $c_2$.

Predicate logic is a potential candidate
for a formal constraint language.
To realize this potential we modify Apt's definition
to obtain the definition given in the following section.
To be able to interface the solving algorithm for \csps\
with the \clpdc\ scheme, we modify the algorithm also.
In the section after that we define how predicate logic can be
used as the constraint language.

\subsection{A modified definition of CSPs}
\label{sec:cspMod}

\begin{definition}\label{def:cspVE}
A Constraint Satisfaction Problem (\csp)
consists of a finite set
$\Vars = \{ x_1,\ldots,x_n\}$ of variables,
a finite set
$\C = \{c_1,\ldots,c_m\}$ of constraints,
each of which is a relation over a sequence of elements of $\Vars$
in the sense of Definition~\ref{def:relOn}.
With each variable $x_i$ is associated a universe $\D_i$,
which is the set of values that $x_i$ can assume.
A \emph{solution} of a \csp\ is an assignment
to each variable $x_i$ of an element of $\D_i$
such that each constraint in $\C$ is satisfied.
\end{definition}

Apparently, the solution set of a \csp\ with set $\mathcal{X}$
of variables is a relation on
$\mathcal{X}$ in the sense of Definition~\ref{def:relOn}.
A compact characterization of the solution set
can be given as follows.

\begin{lemma}
The solution set equals
$ c_1 \Join \cdots \Join c_m $
where $\Join$ is as in Definition~\ref{def:join}. 
\end{lemma}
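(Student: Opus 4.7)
The plan is to prove equality by double inclusion, after first unpacking both sides in terms of their tuple characterizations via Definitions~\ref{def:relOn} and~\ref{def:join}.

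First I would verify, by a short induction on $m$, that Definition~\ref{def:join} (which is binary) is associative, so that the expression $c_1 \Join \cdots \Join c_m$ is well-defined, and that a tuple $\tau$ lies in $c_1 \Join \cdots \Join c_m$ iff $\tau$ is indexed by $\Vars_{c_1} \cup \cdots \cup \Vars_{c_m}$ and, for each $i$, the restriction $\tau \downarrow \Vars_{c_i}$ is a tuple of $c_i$. Associativity follows routinely: both $(\rho_1 \Join \rho_2) \Join \rho_3$ and $\rho_1 \Join (\rho_2 \Join \rho_3)$ unfold, by Definition~\ref{def:join}, to the same set of tuples indexed by $\Vars_1 \cup \Vars_2 \cup \Vars_3$ whose restrictions to each $\Vars_i$ lie in $\rho_i$.

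Next, I would unfold what ``solution'' means. By Definition~\ref{def:cspVE} together with Definition~\ref{def:relOn}, each constraint $c_i$ is a relation indexed by (the set of distinct variables in) its argument sequence, so an assignment $\alpha: \Vars \to \D$ satisfies $c_i$ iff $\alpha \downarrow \Vars_{c_i} \in c_i$. Hence $\alpha$ is a solution iff $\alpha \downarrow \Vars_{c_i} \in c_i$ for every $i = 1,\ldots,m$.

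With these two reformulations in hand, both inclusions are immediate. For the forward direction, a solution $\alpha$ restricted to $\Vars_{c_1} \cup \cdots \cup \Vars_{c_m}$ satisfies the characterization of the join obtained in the first step, so it lies in $c_1 \Join \cdots \Join c_m$. For the reverse direction, a tuple $\tau \in c_1 \Join \cdots \Join c_m$ is indexed by $\Vars_{c_1} \cup \cdots \cup \Vars_{c_m}$ and may be extended to a full assignment on $\Vars$ by arbitrary values on the remaining variables; each such extension satisfies every $c_i$ by construction.

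The main (minor) obstacle is the indexing mismatch between the two sides: the solution set as stated consists of assignments to \emph{all} of $\Vars$, while the join is indexed only by variables actually occurring in some constraint. I would handle this explicitly by identifying a solution with its restriction to $\Vars_{c_1}\cup\cdots\cup\Vars_{c_m}$, noting that variables appearing in no constraint contribute a trivial Cartesian factor and place no restriction on membership in the join.
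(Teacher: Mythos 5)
Your proof is correct. Note that the paper offers no proof of this lemma at all: it is stated as an immediate consequence of Definitions~\ref{def:relOn}, \ref{def:join} and \ref{def:cspVE}, and your double-inclusion argument---via the characterization that a tuple lies in $c_1 \Join \cdots \Join c_m$ exactly when its restriction to the index set of each $c_i$ lies in $c_i$, together with associativity of $\Join$---is precisely the routine verification the paper leaves implicit. Your explicit treatment of the indexing mismatch (solutions are assignments on all of $\Vars$, whereas the join is indexed only by the variables occurring in some constraint, so a solution must be identified with its restriction, and conversely a join tuple extended arbitrarily over the unconstrained variables) addresses a point the paper silently glosses over, and you resolve it in the only reasonable way.
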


For certain \csps\ it is practical to enumerate the solutions.
In other cases the solution set, though finite,
is too large to be enumerated.
And it may be the case that the solution set is uncountable;
moreover its individual solution tuples may consist
of reals that are not computer-representable.

Thus it is often necessary to approximate the solution set.
A convenient form is that of a Cartesian product
$D_1 \times \cdots \times D_n$ that is contained in
$\D_1 \times \cdots \times \D_n$.
Such an approximation has the property that $x_i \not \in D_i$
for any $i \in \{1,\ldots,n\}$ ensures that
$\langle x_1,\ldots,x_n\rangle$ is not a solution.
Making $D_1,\ldots,D_n$ as small as possible gives us as
much information about the solution set as is possible
for approximations of this form.

$D_i$ is called the \emph{domain} for $x_i$ for
$i \in \{1,\ldots,n\}$.
We need to ensure that the subsets $D_i$ of $\D_i$
are computer-representable.
This may not be a restriction when $\D_i$ is finite and small.
It is when $\D_i = \R$.
In general we require that the subsets of $\D_i$ that are allowed
as $D_i$ include $\D_i$ itself and are closed under intersection.
We call such subsets a \emph{domain system}.

\begin{lemma}\label{lem:closed}
Given sets 
$\D_1,\ldots,\D_n$, each with a domain system
and $S \subset \D_1 \times \cdots \times \D_n$.
There is a unique least Cartesian product of domain system elements
containing $S$.
\end{lemma}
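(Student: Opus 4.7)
The plan is to build the least Cartesian product coordinate by coordinate, using projections of $S$, and then verify that the result both lies in the given domain systems and is contained in any other Cartesian product of domain system elements that encloses $S$.

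First, for each $i \in \{1,\ldots,n\}$ let $\pi_i(S) \subseteq \D_i$ denote the projection of $S$ onto the $i$-th coordinate, and let $\mathcal{F}_i$ be the family of all members of the domain system on $\D_i$ that contain $\pi_i(S)$. This family is nonempty since $\D_i$ itself is in the domain system and trivially contains $\pi_i(S)$. Define $D_i^\star$ to be the intersection of all elements of $\mathcal{F}_i$. By closure of the domain system under intersection, $D_i^\star$ is again a member of the system, and $\pi_i(S) \subseteq D_i^\star$ by construction.

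Next I would verify two properties. Containment: for any $(s_1,\ldots,s_n) \in S$ and any $i$ we have $s_i \in \pi_i(S) \subseteq D_i^\star$, so $S \subseteq D_1^\star \times \cdots \times D_n^\star$. Minimality: if $S \subseteq E_1 \times \cdots \times E_n$ with each $E_i$ a domain system element, then $\pi_i(S) \subseteq E_i$, hence $E_i \in \mathcal{F}_i$, and therefore $D_i^\star \subseteq E_i$. Taking the product over $i$ yields $D_1^\star \times \cdots \times D_n^\star \subseteq E_1 \times \cdots \times E_n$. Uniqueness is then immediate from antisymmetry of inclusion: two least Cartesian products would each have to contain the other.

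The principal obstacle is the scope of the ``closed under intersection'' clause in the definition of a domain system. The family $\mathcal{F}_i$ can be infinite, so the argument strictly requires closure under arbitrary (not merely binary or finite) intersections to keep $D_i^\star$ inside the system. The motivating examples, such as intervals with endpoints in a fixed representable set, satisfy this stronger property, and I would interpret the definition that way; absent such closure, the lemma need not hold, and the proof would have to be refined by imposing a descending-chain condition that forces the intersection $D_i^\star$ to be attained by some finite sub-intersection from $\mathcal{F}_i$.
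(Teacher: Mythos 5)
Your proof is correct and is essentially the argument the paper intends (the lemma is stated there without an explicit proof): in each coordinate take the intersection of all domain-system elements containing the projection of $S$, which stays in the system by closure under intersection, and then check containment, minimality via projections, and uniqueness by antisymmetry of inclusion. Your caveat about needing closure under arbitrary rather than binary intersections is already resolved by the paper itself, which elsewhere (the \clpdc\ parameters in Section~\ref{domain-constraints}, and the remark that domain systems have a finite number of sets) takes a domain system to be a \emph{finite} family containing $\D_i$ and closed under intersection, so each family $\mathcal{F}_i$ is finite and binary closure suffices.
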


\begin{definition}
Given sets 
$\D_1,\ldots,\D_n$, each with a domain system
and $S \subset \D_1 \times \cdots \times \D_n$.
The least Cartesian product of domain system elements containing $S$,
which exists according to Lemma~\ref{lem:closed},
is denoted $\Box S$.
\end{definition}

With each constraint there is associated
a \emph{domain reduction operation} (\dro),
which is intended to reduce
the domains of one or more variables occurring in the constraint.

\begin{definition}\label{def:dro}
Given a \csp\ and a relation
$\rho \subset \D_{i_1} \times \cdots \times \D_{i_k}$. 
Let constraint $c$ be a relation $\rho$ on
$\langle x_{i_1}, \ldots, x_{i_k}  \rangle$
A \emph{domain reduction operation} (\dro\ ) for $c$ is a function
that maps Cartesian products
$
D_{i_1} \times \cdots \times D_{i_k}
\subset
\D_{i_1} \times \cdots \times \D_{i_k}
$
to Cartesian products of the same type.
The map of the function is given by
$
D_{i_1} \times \cdots \times D_{i_k}
\mapsto
D'_{i_1} \times \cdots \times D'_{i_k}
$
where
$
D'_{i_1} \times \cdots \times D'_{i_k}
$
satisfies
$$
\Box((D_{i_1} \times \cdots \times D_{i_k}) \cap \rho)
\subset
D'_{i_1} \times \cdots \times D'_{i_k}
\subset
D_{i_1} \times \cdots \times D_{i_k}.
$$
If the left inclusion is equality,
then we call the \dro\ a \emph{strong} one.
\end{definition}

This operation was introduced by \cite{bnldr97}
under the name ``narrowing''.
The intended application had intervals for the domains,
hence the name.

Note that domains are reduced only by removing non-solutions.
As one can see, \dros\ are \emph{contracting}\/:
if they do not succeed in removing anything,
they leave the domains unchanged. 
Strong \dros\ are \emph{idempotent}\/:
multiple successive applications of the same \dro\ have the same
effect as a single application.

Success of the constraint satisfaction method of solving problems
depends on finding efficiently executable strong \dros.

\subsubsection{Constraint Propagation}

\begin{definition}
A \emph{computation state} of a \csp\
is $D_1 \times \cdots \times D_n$
where $D_i \subseteq \D_i$ is a domain and is associated with $x_i$,
for $i = 1,\ldots,n$. \\
A \emph{computation} of a \csp\ is a sequence of computation states
in which each (after the initial one) is obtained
from the previous one by applying the \dro\ of one of the constraints.\\
The \emph{limit} of a computation is the intersection
of its states. \\
A \emph{fair computation} of a \csp\ is a computation
in which each of the constraints is represented
by its \dro\ infinitely many times.
\end{definition}
Fair computations have infinite length.
However, no change occurs from a certain point onward
(domain systems have a finite number of sets).
By the idempotence of strong \dros,
this is detectable by algorithms
that generate fair computations,
so that they can terminate accordingly.

\begin{theorem}\cite{aptEssence}
The limit of a fair computation of a \csp\
is equal to the intersection of the initial state
of the computation with the greatest fixpoint common to all \dros.
\end{theorem}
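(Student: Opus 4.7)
The argument is the classical one for chaotic iteration of contracting and monotonic operators on a lattice of finite height. I first observe that the computation stabilizes: because every DRO is contracting, the sequence $S_0 \supseteq S_1 \supseteq \cdots$ is weakly decreasing, and because each $\D_i$ admits only finitely many domain-system elements, the lattice of Cartesian products of such elements is finite. Hence some $N$ satisfies $S_i = S_N$ for all $i \geq N$, and so $L = \bigcap_i S_i = S_N$.

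The easy inclusion $L \subseteq S_0 \cap F$ (with $F$ the greatest fixpoint common to all DROs) then falls into two parts. First, $L \subseteq S_0$ by iterated contraction. Second, $L$ is itself a common fixpoint by fairness: for every DRO $f_j$, fairness supplies some index $i \geq N$ with $S_{i+1} = f_j(S_i)$, and $S_i = S_{i+1} = L$ forces $f_j(L) = L$. Therefore $L \subseteq F$ as well.

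For the reverse inclusion $S_0 \cap F \subseteq L$, I would first establish monotonicity of each DRO as a preliminary lemma: it is not fully explicit in Definition~\ref{def:dro}, but it is immediate for strong DROs computed via the $\Box$ closure, and I would confine the argument to that case. Given monotonicity, a short induction shows $S_0 \cap F \subseteq S_i$ for every $i$: the base case is trivial, and for the step, assuming $S_0 \cap F$ is a common fixpoint of every $f_j$, one has $S_0 \cap F = f_j(S_0 \cap F) \subseteq f_j(S_i) = S_{i+1}$. Intersecting over $i$ yields $S_0 \cap F \subseteq L$.

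The main obstacle is showing $S_0 \cap F$ is itself a common fixpoint of every $f_j$. Contraction and monotonicity immediately give $f_j(S_0 \cap F) \subseteq f_j(F) \cap f_j(S_0) \subseteq F \cap S_0$, but the converse inclusion is where care is needed. The cleanest resolution, which I would adopt, is to read ``greatest fixpoint common to all DROs'' as taken within the sublattice $\{X : X \subseteq S_0\}$ of Cartesian products bounded above by $S_0$. In this reading $F \subseteq S_0$ and $S_0 \cap F = F$ automatically, so the inductive step collapses to $F = f_j(F) \subseteq f_j(S_i) = S_{i+1}$ and no auxiliary fixpoint lemma is required.
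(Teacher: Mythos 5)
The paper itself contains no proof of this theorem: it is quoted from Apt's ``essence of constraint propagation'' paper, so the only benchmark is the standard chaotic-iteration argument, which is precisely what you reconstruct. Your outline is correct: stabilization follows from contraction plus finiteness of the domain systems, the limit $L=S_N$ is a common fixpoint by fairness, and the reverse inclusion is the usual monotonicity induction. So in substance you have reproduced Apt's proof rather than found a different route.

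The two caveats you raise are worth stressing, because neither is cosmetic. First, your decision to read ``greatest fixpoint common to all \dros'' as the greatest common fixpoint among states contained in $S_0$ is not just the cleanest option; it is forced, since with the global greatest fixpoint the stated identity is simply false. Take one binary constraint $c=\{(1,1),(2,2)\}$ on $\{1,2\}\times\{1,2\}$, all boxes as domain system, and the strong \dro\ $D\mapsto\Box(D\cap c)$: the full box is a fixpoint (because $\Box(c)$ is the full box), so the global greatest fixpoint $F$ is the full box, yet from $S_0=\{1,2\}\times\{1\}$ the limit is $\{1\}\times\{1\}$, strictly smaller than $S_0\cap F=S_0$. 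Under your reading the statement coincides with Apt's theorem (and the intersection with the initial state becomes redundant, since $F\subseteq S_0$). Second, monotonicity really is an extra hypothesis and not a consequence of Definition~\ref{def:dro}: with universe $\{1,2\}$ and $\rho=\{2\}$, the definition permits a \dro\ with $f(\{1,2\})=\{2\}$ but $f(\{1\})=\{1\}$ (legal because $\Box(\{1\}\cap\rho)=\emptyset\subseteq\{1\}$), which is not monotone, and then the fixpoints contained in $S_0=\{1,2\}$ are $\{1\},\{2\},\emptyset$, with no greatest element at all. So restricting to strong \dros\ (which are monotone, as you argue via monotonicity of $\Box$) or assuming monotonicity outright, as Apt does, is the right repair. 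Finally, note that your induction only uses that $F$ is \emph{some} common fixpoint contained in $S_0$; run for all such $F$, it simultaneously shows that the limit is an upper bound for them and is itself one of them, which proves existence of the greatest common fixpoint below $S_0$ as a by-product.
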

%

For a given \csp\ the intersection of the states of any fair
computation only depends on the initial state.
It is therefore independent of the computation itself.
Apparently the \csp\ maps the set of Cartesian products
to itself. It is a contracting, idempotent mapping.

\begin{lemma}
\label{thm:soundCSP}
Let $D$ be the initial state of a fair computation of a \csp.
Then the limit of the fair computation contains the intersection
of $D$ with the solution set.
\end{lemma}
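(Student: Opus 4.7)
The plan is to show that every solution of the \csp\ that sits inside the initial state $D$ survives every application of a \dro, and therefore persists into the limit (which is, by definition, the intersection of all states in the computation). The whole argument rests on the left inclusion in Definition~\ref{def:dro}, which guarantees that \dros\ never discard points that satisfy the associated constraint.

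First I would fix a point $\tau \in D \cap S$, where $S$ denotes the solution set of the \csp, and denote the successive computation states by $D^{(0)}, D^{(1)}, D^{(2)},\ldots$, with $D^{(0)} = D$. The goal is an induction on $j$ with inductive hypothesis $\tau \in D^{(j)}$; then passing to the intersection gives $\tau \in \bigcap_j D^{(j)} = \lim$.

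For the inductive step, suppose the transition $D^{(j)} \to D^{(j+1)}$ applies the \dro\ associated with a constraint $c$, where $c$ is a relation $\rho$ on $\langle x_{i_1},\ldots,x_{i_k}\rangle$. Only the components $D^{(j)}_{i_1},\ldots,D^{(j)}_{i_k}$ are modified, to produce new components $D'_{i_1},\ldots,D'_{i_k}$. Write $\tau^\star = \langle \tau(x_{i_1}),\ldots,\tau(x_{i_k})\rangle$. Since $\tau \in S$, it satisfies $c$, so $\tau^\star \in \rho$; since $\tau \in D^{(j)}$, we also have $\tau^\star \in D^{(j)}_{i_1}\times\cdots\times D^{(j)}_{i_k}$. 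Hence $\tau^\star \in (D^{(j)}_{i_1}\times\cdots\times D^{(j)}_{i_k}) \cap \rho$, which lies inside $\Box((D^{(j)}_{i_1}\times\cdots\times D^{(j)}_{i_k}) \cap \rho)$, and by Definition~\ref{def:dro} the latter is contained in $D'_{i_1}\times\cdots\times D'_{i_k}$. The untouched coordinates of $\tau$ already lie in the unchanged domain components, so $\tau \in D^{(j+1)}$, completing the induction.

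There is no serious obstacle: the only subtle point is that one must use the left inclusion in Definition~\ref{def:dro} (rather than merely contraction) to see that the \dro\ cannot jettison a concrete solution point, and one must not confuse ``points of the solution set'' with the closure operator $\Box$. Fairness of the computation is not required for this direction — the statement only uses that \emph{every} state contains $D \cap S$ — so the lemma is really a pointwise soundness result that holds for any computation of the \csp, fair or not.
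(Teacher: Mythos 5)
Your proof is correct. The paper in fact states Lemma~\ref{thm:soundCSP} without giving any proof, so there is nothing to compare against; your argument is the natural one and fills that gap: fix a solution $\tau$ in the initial box, and use the left inclusion of Definition~\ref{def:dro} together with the fact that $\Box$ contains its argument to see that the projection $\langle\tau(x_{i_1}),\ldots,\tau(x_{i_k})\rangle$, which lies in $(D_{i_1}\times\cdots\times D_{i_k})\cap\rho$ because $\tau$ satisfies $c$ and lies in the current state, survives into the reduced box, so by induction $\tau$ lies in every state and hence in their intersection, which is the limit. Your closing observations are also accurate: contraction alone would not suffice (one genuinely needs the lower bound $\Box((D_{i_1}\times\cdots\times D_{i_k})\cap\rho)\subset D'_{i_1}\times\cdots\times D'_{i_k}$, though in fact only the weaker inclusion of the intersection itself is used), and fairness plays no role in this soundness direction --- it matters only for the fixpoint characterization in the theorem quoted from Apt, not for the containment claimed here.
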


\begin{definition}
The transition from the initial state of a computation
to the limit of that computation is called \emph{constraint propagation}.
\end{definition}

The reason for the name is that the effect
of a \dro\ application on a domain may cause
a subsequent \dro\ applications to reduce other domains.

\subsubsection{Enumeration}

Constraint propagation only goes part way toward solving a \csp:
it results in a single Cartesian product containing all solutions.
In general this single Cartesian product needs to be split up
to give more information about any solutions
that might be contained in it.
This is what enumeration does.

Before a more precise definition,
let us sketch the solving process
by means of the \csp\ arising from a graph-colouring problem.
In case constraint propagation yields
an empty domain in the computation state,
the solving process is over: absence of solutions has been proved.
Suppose the resulting computation state
does not have an empty domain.
We only know that any solutions that may exist
are elements of the Cartesian product of the domains.
If all domains are singletons,
then the corresponding tuple is a solution.
If not, one enumerates a domain
with more than one element (say, the smallest such).
In turn, for each element in that domain,
one assumes it as the value of the variable concerned
and leaves the other domains unchanged.
To the smaller CSP thus obtained, one applies constraint propagation.
This may, in turn, require enumeration; and so on.

To make the idea applicable to the case
where there are infinite domains,
we split a domain instead of enumerating it.
Then it works as above if the domains are countable.

To split an uncountable domain,
then we need the property that the domain system is finite.
Splitting is restricted to producing results
that belong to the domain system.
This implies that only a finite number of splits
are possible.
In case of an uncountable domain
it is not in general possible to identify solutions.

Enumeration yields tuples consisting of domains
that are as small as the domain system allows
that together contain all solutions, if any exist.
And of course solving the CSP results in eliminating
almost all of the Cartesian product of the initial domains
as not containing any solutions.

\paragraph{Enumeration algorithm}

\begin{tabbing}
To enumerate computation state $S$:\\
xxx \=    \kill
\> If a domain is empty, then halt.\\
\> If \= one of the domains is a singleton, \\
\>    \> then substitute the element as value of the corresponding variable \\
\>    \> and construct the computation state $S'$
         with that variable eliminated. \\
\>    \> Enumerate $S'$.\\
\> else \= \kill 
\> else \>  split a domain $d$ into domain system elements
            $d_0$ and $d_1$. \\
\>      \>  Construct computation states $S_i$ by replacing $d$ in $S$
            by $d_i$, for $i = 0,1$. \\
\>      \> Enumerate $S_0$; Enumerate $S_1$.
\end{tabbing}

Often too many enumeration results are generated.
Sometimes the domain system comes with a suitable notion of adjacency
so that adjacent enumeration results can be consolidated into a single one.
Such a consolidation may trigger further consolidations.

\section{The Domain Constraint subscheme of the CLP Scheme}
\label{domain-constraints}

The \clp\ scheme is open-ended:
it is basically a scheme for using Horn-clause rules to generate a multitude of constraint-satisfaction problems.
The parameters of the scheme allow
a great variety of useful algorithms
and of data-types for these to act on.
A first step in reducing the vast variety of options
is \clpdc, the domain-constraint subscheme of the \clp\ scheme.
We define \clpdc\ by visiting
first the parameters $\langle \Sigma, \cal I, L, T\rangle $,
and then the transitions of the \clp\ scheme.

\subsection{The parameters}
$\Sigma$: Some domains are such that individual elements
may not be representable in a computer,
if only because there are infinitely many of them.
Satisfactory results can still be obtained
by designating a finite set of subsets of the domain
that are computer-representable.
To accommodate these the signature $\Sigma$
includes a unary \emph{representability predicate}
for each of these subsets.

$\cal I$: the domain component $\D$ of the $\Sigma$-structure $\cal I$
has to admit a domain system: a finite set of subsets of $\D$
that includes $\D$ and is closed under intersection.

$\cal L$: the language of constraints
consists of conjunctions of atomic formulas.

$\cal T$: the theory giving the semantics of the constraints
links unary representability predicates to representable subsets
of $\D$.
This is done in part by clauses describing the effects of the \dros.
In Definition~\ref{def:dro}
let the constraint $c$ be $r(x_{i_1}, \ldots, x_{i_k})$.
Then the clauses describing the \dro\ of $c$ are
\begin{equation}\label{eq:droClause}
d'_j(x_{i_j})
 \leftarrow 
d_1(x_{i_1}), \ldots,
d_k(x_{i_k}),
r(x_{i_1}, \ldots, x_{i_k})
\end{equation}
for $j = 1, \ldots, k$.
Further details depend on the instance concerned of \clpdc.
The idea of expressing the action of a \dro\
in the form of an inference rule is due to \cite{apt98a}.
This is closely related to the inclusion of an implication
like the one above in a theory.

\subsection{The transitions}
The $\subr\ $ and $\subc\ $ transitions:
These only serve to transform goal atoms into constraint atoms,
and are needed unchanged in the \clpdc\ subscheme.

\paragraph{}
The $\subi\ $ transition:
In the \clp\ scheme this transition
is intended to accommodate any inference
that transfers the burden of constraint
from the passive constraints $P$
to the efficiently solvable active constraints $A$.
In the \clpdc\ subscheme such inference is restricted
to those forms that leave $P$ unchanged:
the information contained in them is only used
to strengthen the active constraints $A$.
Moreover, $A$ is restricted to the form
$\{d_1(x_1),\ldots,d_n(x_n)\}$
where each variable in the passive constraint $P$
occurs exactly once and where $\{d_1,\ldots,d_n\}$
are unary representability predicates.

As $P$ is the unchanging conjunction of the constraints,
we refer to it as $C$ in the \clpdc\ subscheme.
As $A = \{d_1(X_1),\ldots,d_n(X_n)\}$
only states of each of the variables
that it belongs to a certain domain
we refer to it as $D$ in the \clpdc\ subscheme.
As a result of these renamings
we have a close relationship between \csp\ and \clpdc:
$C$ and $D$ in \csp\ and in \clpdc\ are counterparts of each other.
As a result of these restrictions and renamings,
the constraint store management transition becomes
$\langle G, D, C\rangle  \subi \langle G, D', C\rangle$
if $\langle D', C\rangle = \infer(D,C)$.

The \emph{infer} operation is performed by setting up a \csp\
with an initial state and determining the limit of the fair
computations from the initial state.
This limit is then the $D'$ in $\langle D', C\rangle = \infer(D,C)$.

The \csp\ that implements \emph{infer} in this way
has the following components.
\begin{enumerate}
\item
The variables are those that occur in the passive constraint $C$.
\item
The universes over which the variables range
are equal to each other and to $\D$.
\item
If a constraint atom $c_j$ of \clpdc\ is $r(x_{i_1}, \ldots, x_{i_{k_j}})$,
then the corresponding constraint of the \csp\ is
$\rho$, the meaning of $r$, on
$\langle x_{i_1}, \ldots, x_{i_{k_j}}\rangle$,
with ``on'' as in Definition~\ref{def:relOn}.
\item
If the active constraint is $\{d_1(x_1),\ldots,d_n(x_n)\}$,
then the initial computation state in the \csp\ is
$D_1 \times \cdots \times D_n$
with $D_i = \{ x \in \D \mid d_i(x) \}$, for $i = 1,\ldots,n$.
\end{enumerate}

In the \csp\ thus obtained a fair computation is constructed
with limit
$D'_1 \times \cdots \times D'_n$.
These domains are then used to determine the active constraint
$D' = \{d'_1(x_1),\ldots,d'_n(x_n)\}$,
where the $d_i$ are obtained from
$D'_i = \{ x \in \D \mid d'_i(x) \}$, for $i = 1,\ldots,n$.

In this way \csp\ computations can be used in \clpdc.

\paragraph{}
The $\subs\ $ transition:
In the \clp\ scheme this transition checks as best as it can
whether $P \wedge A$ is consistent.
In the \clpdc\ subscheme no attempt is made to check $C$ for consistency.
It does this only for $D = d_1(x_1) \wedge \cdots \wedge d_n(x)$
and this is simply a check
whether any of the $d_i$ is the predicate for the empty subset of $\D$.

\begin{lemma}
The existence of a successful derivation implies that
${\cal C,T} \models R(C) \subset R(Q)$.
\end{lemma}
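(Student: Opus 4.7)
The plan is to chain Theorem~\ref{thm:soundness} (soundness of \clp) with Lemma~\ref{thm:soundCSP} (soundness of \csp\ constraint propagation). Since every \clpdc-transition is an instance of the corresponding \clp-transition, a successful \clpdc-derivation is a successful \clp-derivation, and Theorem~\ref{thm:soundness} yields ${\cal C,T} \models R(\exists (C \wedge D)) \subset R(Q)$, where the existential quantification is over variables free in $C \wedge D$ but not in $Q$. It therefore suffices to prove $R(C \wedge D) = R(C)$ at termination, which reduces to $R(C) \subset R(D)$ under the identification of $D$ with the Cartesian product $D_1 \times \cdots \times D_n$ of its unary domains.

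I would establish this inclusion as an invariant of every reachable state. At $\langle Q,\emptyset,\emptyset\rangle$ both $C$ and $D$ are empty and the invariant holds vacuously. A $\subc$ transition only appends a constraint atom to $C$, so $R(C)$ can only shrink and the invariant is preserved. A $\subr$ transition enlarges $C$ by the equalities $s_i=t_i$ and the body $B$ with fresh variables; projected onto the variables already covered by $D$, the relation $R(C)$ cannot grow, while $D$ imposes no condition on the newly introduced variables, so the invariant again survives. A $\subi$ transition replaces $D$ by the limit $D'$ of a fair \csp-computation whose constraints are exactly those of $C$ and whose initial state is the incoming $D$; by Lemma~\ref{thm:soundCSP}, $D'$ still contains $R(C) \cap D$, which by the inductive hypothesis equals $R(C)$. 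The $\subs$ transition changes nothing. Hence $R(C) \subset R(D)$ at the terminal state, giving $R(C \wedge D) = R(C) \cap R(D) = R(C)$.

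Substituting into the \clp-soundness bound yields ${\cal C,T} \models R(\exists\, C) \subset R(Q)$, which is the claim of the lemma once $R(C)$ is read as its existential projection onto $\Vars_Q$. The subtlest step is the $\subi$ case: Lemma~\ref{thm:soundCSP} is a set-theoretic fact about the intended interpretation \I, whereas Definition~\ref{def:models} requires the inclusion to hold in every model of ${\cal C,T}$. This is why the theory $T$ must include the clauses~(\ref{eq:droClause}) axiomatising each \dro: together with the constraint atoms of $C$ they logically imply the domain-membership atoms of $D$, lifting the semantic soundness of constraint propagation to a syntactic consequence of ${\cal C,T}$.
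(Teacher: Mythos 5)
Your proposal follows the same decomposition as the paper's own (very terse) proof: apply Theorem~\ref{thm:soundness} to get ${\cal C,T} \models R(C \wedge D) \subset R(Q)$, then conclude via ${\cal C,T} \models R(C) \subset R(D)$, so that $R(C\wedge D)=R(C)$. The paper merely asserts the second inclusion, whereas you justify it by an invariant argument over the transitions plus the observation that the \dro\ clauses~(\ref{eq:droClause}) make the domain atoms of $D$ logical consequences of $C$ and ${\cal T}$ (the same lifting the paper uses for Theorem~\ref{thm:soundNCSP}); this fills in detail the paper leaves implicit, but the approach is essentially identical and correct.
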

\begin{proof}
By theorem~\ref{thm:soundness} we have 
${\cal C,T} \models R(C \wedge D) \subset R(Q)$
and we have ${\cal C,T} \models R(C) \subset R(D)$.
\end{proof}

\section{Interval constraints} \label{sec:intervConstr}

We have used in Section~\ref{review}
the \clp\ scheme as starting point.
To establish the direction in which to proceed,
we identified in Section~\ref{sec:csp}
a desirable point outside of logic programming:
the \csp\ paradigm.
Here are to be found useful algorithms
for computational tasks of interest.
These range from the ``most discrete'' such as graph colouring
to the ``most continuous'' such as solving non-linear
equalities and inequalities over the reals.

After thus establishing a line along which to travel,
we went back in Section~\ref{domain-constraints}
to establish a subscheme of the \clp\ scheme,
that of the domain constraints,
to emulate within logic the main features of the \csp\
paradigm.

It is now time to declare our main interest:
real valued variables rather than discrete ones.
It so happens that there is a real-variable specialization
of the \csp\ paradigm, interval constraints,
and it will be useful to take an excursion from logic again
and review this next.

We are interested in \csps\ with the following characteristics.
The variables range over the reals;
that is, all universes
$\D_1,
\ldots,
\D_n$
are equal to the set $\R$ of reals. 
The domain system is that of the floating-point intervals.
The constraints include
the binary $\leq$ and the ternary $sum$ and $prod$.
The reason is that these have strong \dros\
that are efficiently computable.
Strong \dros\ are also available
for $=$, $max$, $abs$, and for rational powers.
For the constraints corresponding to the transcendental functions
\dros\ are available that are idempotent, but not strong.
The definition of ``strong'' requires them
to be the least floating-point box
containing the intersection of the relation with the argument box.
That the \dro\ is not strong
has to do with the difficulty of bounding these function
values between adjacent floating-point numbers.
But \dros\ closely approximating this ideal
are used in some systems \cite{vhlmyd97}.

Let us consider an example of a \dro\
for use with real-valued variables constrained
by the relation
$$
sum = \{\langle x,y,z \rangle \in \Rea  \mid x+y = z\}.
$$
Suppose the domains for
$x$,
$y$, and
$z$
are
$[0,2]$,
$[0,2]$, and
$[3,5]$.
Clearly, neither $x$ nor $y$ can be close to $0$,
nor can $z$ be close to $5$.
Accordingly, when this \dro\ is applied, 
these intervals are reduced to
$[1,2]$,
$[1,2]$, and
$[3,4]$.

The numbers $1$ and $4$ arise by computing
$3-2$ and $2+2$.
Here no rounding errors were made.
This is exceptional.
Let us now consider the case in which
the initial intervals are scaled down by a factor of ten to
$[0.0,0.2^{+}]$,
$[0.0,0.2^{+}]$, and
$[0.3^{-},0.5^{+}]$.
Here $0.2^{+}$ is the least floating-point number
not less than $0.2$,
and similarly for the other superscripts.
Now the corresponding operations $0.3^{-}-0.2^{+}$ and $0.2^{+}+0.2^{+}$
do incur rounding errors.
$0.3^{-}-0.2^{+}$ is evaluated to a floating-point number
we shall name $0.1^{- -}$;
similarly, $0.2^{+}+0.2^{+}$ is evaluated to $0.4^{+ +}$,
so that the \dro\ gives the intervals
$[0.1^{--},0.2^{+}]$,
$[0.1^{--},0.2^{+}]$, and
$[0.3^{-},0.4^{+ +}]$
for $x$, $y$, and $z$, respectively.
Here $0.1^{--}$ may equal $0.1^{-}$ or $(0.1^{-})^{-}$.
The decimal equivalents
of the binary floating-point numbers computed here are so lengthy
that users are neither willing to write nor to read them,
so that further containment precautions
are called for on (decimal) input and output.

In this way single arithmetic operations
find their counterpart in interval constraints.
To give an idea of how the arbitrarily complex arithmetic expressions
in nonlinear equalities and inequalities
are translated to interval constraints
consider the equation $1/x+1/y=1/z$ relating the resistance $z$
of two resistors in parallel with resistances $x$ and $y$.
Constraint processing is not directly applicable when,
as we assume here,
we only have \dros\ for $sum$ and $inv$.
We therefore convert the equation to the equivalent form
$$
\exists u,v,w \in \R.\;
inv(x,u) \wedge inv(y,v) \wedge inv(z,w) \wedge sum(u,v,w). 
$$
Accordingly, the equation is translated to a CSP with
$\Vars = \langle x,y,z,u,v,w \rangle$
and
$\C = \{inv(x,u),inv(y,v),inv(z,w),sum(u,v,w)\}.$

In numerical CSPs we can conclude,
according to Theorem~\ref{thm:soundCSP}, that the solution set
is empty when the limit of the computation is empty.
However, a nonempty limit can still coexist with an empty solution
set.

It is possible to develop \dros\ for complex expressions
such as $1/x+1/y=1/z$
\cite{bmcv94}.
It is useful to know that this paper is antedated
by the technical report version of \cite{bnldr97}.

\section{CLP/NCSP: the CLP/DC subscheme with a numerical CSP}
\label{sec:clpncsp}

In Section~\ref{domain-constraints} we described
how the open-ended \clp\ scheme is narrowed down
to the domain-constraints subscheme \clpdc.
In this section we take a step further in this direction
to obtain a subscheme suitable for numerical computation.
We do this by following the specification
in Section~\ref{domain-constraints}.

\subsection{The hierarchy of theories}
$\Sigma$: 
The signature contains the language elements needed for the usual
theory of the real numbers:
constants including $0$ and $1$; the unary function symbol $-$;
the binary function symbols
$+$,
$-$,
$*$, and
$/$;
the binary predicates $\leq$ and $\geq$. 
To these we add:
\begin{itemize}
\item
A unary representability predicate
$d_{a,b}$ for every floating-point interval in a given floating-point
number system.
For the IEEE-standard double-length floating-point numbers
this means in the order of $2^{127}$ unary predicates.
Not a mathematically elegant signature, but a finite one.
\item
Ternary predicates $sum$ and $prod$.
\end{itemize}

\paragraph{}
$\cal I$: 
the domain component $\D$ of the $\Sigma$-structure $\cal I$
is the set $\R$ of real numbers.
The domain system consists of the floating-point intervals,
which are sets of reals.
The floating-point intervals include $\R$ itself
and are closed under intersection,
so include the empty interval.

$\cal T$: to the axioms of the usual theory of the reals
we add:
\begin{eqnarray*}
\forall x.\;[d_{-\infty,b}(x) & \leftrightarrow &
      x \leq b] \mbox{ for every floating-point number } b
\\ 
\forall x.\;[d_{a,b}(x)       & \leftrightarrow &
      a \leq x, x \leq b] \mbox{ for every pair of flpt numbers such that }
      a \leq b
\\ 
\forall x.\;[d_{a,+\infty}(x) & \leftrightarrow &
      a \leq x] \mbox{ for every floating-point number } a
\\ 
\forall x,y,z.\;[sum(x,y,z) &\leftrightarrow & x+y=z]
\\ 
\end{eqnarray*}
We refer to the resulting theory as ${\cal T}_1$.
The only difference with the usual axiomatization of the reals
is that meanings are established for the newly introduced
predicates.

The effect of the \dro\ of a constraint is described in clauses
as in equation~\ref{eq:droClause} in Section~\ref{domain-constraints}.
For each of the atomic constraints in the passive constraint $C$
this causes clauses to be added to ${\cal T}_1$.
We call the resulting theory ${\cal T}_2$.

\begin{theorem}\label{thm:soundNCSP}
Let $C$ be the passive constraint,
let $D$ be the initial active constraint,
and let $D'_1,\ldots,D'_m$ be the active constraints
corresponding to the results of a \csp\ enumeration
starting with initial constraint corresponding to $D$
and constraints corresponding to $C$.  Then
${\cal T}_2 \models R(C\wedge D) \subset [R(D'_1)\cup \cdots \cup R(D'_m)]$.
\end{theorem}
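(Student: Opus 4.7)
The plan is to proceed by induction on the enumeration tree whose leaves carry the $D'_1,\ldots,D'_m$. I would first strengthen the statement to a tree-local claim: at every node $N$ of the enumeration, with incoming intermediate active constraint $D_N$, one has
$R(C \wedge D_N) \subset \bigcup_{j} R(D'_j)$
in every model of ${\cal T}_2$, where the union ranges over the leaves of the subtree rooted at $N$. The theorem itself is the case when $N$ is the root, so that $D_N = D$.

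I would fix an arbitrary model of ${\cal T}_2$ and argue set-theoretically inside it. Since ${\cal T}_1 \subseteq {\cal T}_2$ axiomatizes the reals and pins down the interpretations of all $d_{a,b}$, $sum$ and $prod$, the \csp\ constructed in the $\infer$ procedure of Section~\ref{domain-constraints} has as its solution set exactly the join of the relations denoting the atoms of $C$, which by the lemma equating $R(C_1 \wedge C_2)$ with $R(C_1) \Join R(C_2)$ in Section~\ref{logPrel} (iterated) is $R(C)$ as a relation on $\Vars_C$. Analogously $R(D_N)$ is the Cartesian product of the sets denoted by its unary atoms, so $R(C \wedge D_N) = R(C) \cap R(D_N)$ by the $\wedge$-identity for $R$ of Section~\ref{subsec:denotations}.

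For the base case, a leaf $D'_j$ arises from its incoming $D_N$ by a fair \csp\ computation starting from state $R(D_N)$. By Lemma~\ref{thm:soundCSP}, the limit $R(D'_j)$ contains $R(D_N) \cap R(C) = R(C \wedge D_N)$, which is the required inclusion with the right-hand union collapsing to a single term. For the inductive step at a splitting node, the Enumeration algorithm replaces a domain $d$ of $x_i$ by two domain-system elements $d_0,d_1$ whose union covers $d$ (a soundness requirement on splitting); at a singleton-substitution node the incoming and outgoing states denote the same relation after re-introducing the substituted value. In either case the successor states $S_0,\ldots$ satisfy $\bigcup_r R(S_r) \supset R(D_N)$; intersecting with $R(C)$ distributes to yield $R(C \wedge D_N) \subset \bigcup_r R(C \wedge S_r)$, and the induction hypothesis applied to each branch closes the argument.

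The main obstacle will be the bookkeeping at splitting and singleton-elimination nodes: the variable set associated with $R(D_N)$ may legitimately shrink as variables get substituted, so the identities $R(C \wedge D_N) = R(C) \cap R(D_N)$ and the $\wedge$-identity for $R$ must be applied in the Definition~\ref{def:looseDenot} form that allows a larger ambient variable set, rather than the bare $R(\cdot)$ of Definition~\ref{def:bareDenot}. A secondary, more routine point is that the claim is about all models of ${\cal T}_2$ rather than just the standard real model; this is harmless because the $d_{a,b}$-biconditionals and the \dro-describing clauses of form (\ref{eq:droClause}) added to ${\cal T}_1$ only restrict the class of models and pin down exactly the denotations used in the argument above.
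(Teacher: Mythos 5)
Your proposal is correct in substance but takes a genuinely different route from the paper's own proof, which consists of a single observation: every application of a \dro\ during propagation corresponds to an inference with one of the clauses of the form of Equation~(\ref{eq:droClause}) that were added to ${\cal T}_1$ to obtain ${\cal T}_2$; since those clauses are axioms of ${\cal T}_2$, every narrowing step is sound in every model of ${\cal T}_2$ by fiat, and the paper leaves the bookkeeping for splitting, singleton elimination and the union over enumeration leaves implicit. You instead argue model-theoretically, by induction on the enumeration tree, with Lemma~\ref{thm:soundCSP} discharging the propagation phases and the covering property $d \subseteq d_0 \cup d_1$ (plus distributing intersection over union) discharging the splits. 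What your route buys is exactly what the paper omits: an explicit account of where the union $R(D'_1)\cup\cdots\cup R(D'_m)$ comes from, and the careful use of Definition~\ref{def:looseDenot} to keep a common ambient variable set when variables are eliminated --- a point the paper does not spell out. What the paper's route buys is a cleaner treatment of the quantification over \emph{all} models of ${\cal T}_2$: Lemma~\ref{thm:soundCSP} is stated for the concrete \csp\ over \R\ with \dros\ in the sense of Definition~\ref{def:dro}, and in a nonstandard model of ${\cal T}_2$ the executed floating-point narrowings are not a priori \dros\ for that model's relations, so citing the lemma inside an arbitrary model needs a step-by-step license --- which is precisely what the clauses of Equation~(\ref{eq:droClause}) supply. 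You gesture at this in your closing remark, but it would be tighter to justify each propagation step in the arbitrary model directly by the corresponding axiom of ${\cal T}_2$ (the paper's argument), using Lemma~\ref{thm:soundCSP} only as the template for the set-theoretic bookkeeping; likewise, ``${\cal T}_1$ pins down the interpretations'' is overstated for a first-order theory, though harmless once the added clauses carry the weight.
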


It would be more convincing if we could assert that
${\cal T} \models R(C\wedge D) \subset [R(D'_1)\cup \cdots \cup R(D'_m)]$,
as $\cal T$ is the usual theory of the reals,
without computer-related artifacts.
This is not possible,
as $R(C)$ and $R(D)$ contain constraint predicates 
and these do not occur in $\cal T$.
However, all axioms that are in ${\cal T}_2$
and not in $\cal T$
are logical consequences of $\cal T$.
\begin{proof}
Every application of a \dro\ corresponds to an inference
with one of the rules in ${\cal T}_2$ of the form
of Equation~(\ref{eq:droClause}).
\end{proof}

It is now time to look at examples of what we can do with the tools
developed so far.
The first two examples concern
a polynomial in a single real variable
and represent it by a term $p$ in the variable $x$.
In these examples the problem is stated in a single constraint,
so only uses a part of the \clp\ paradigm.
The third example is a toy design problem.
Here the \clp\ paradigm is fully exercised:
multiple derivations are generated,
each of which is potentially a significant numerical \csp.

\subsection{Semantics of solving numerical inequalities}

Consider the problem of determining
where the given polynomial is non-positive.
This corresponds to the constraint $p \leq 0$.
In ${\cal T}_2$ we can translate $p \leq 0$ to a set $C$ of constraints.
For example, if $p$ is $x*(x-2)$ we have in ${\cal T}_2$
$$
\forall x [
  x * (x-2) = 0 \leftrightarrow
  \exists v,w.\; sum(v,2,x) \wedge prod(x,v,w) \wedge w \leq 0]
$$
so that we have the constraint $\Gamma$ equal to
$\{sum(v,2,x) \wedge prod(x,v,w) \wedge w \leq 0\}$.
A highly complex $p$ will give rise to a $C$ with many atoms and many variables.

Soundness (theorem~\ref{thm:soundNCSP}) implies
that the active constraints in the answer constraint
for this problem \clpncsp\ contains all intervals in which
$p$ is zero or negative.
Completeness implies
that whenever we have for a constraint $\Gamma$ that
\begin{equation}\label{eq:ineqCond}
{\cal T}_2 \models R(\Gamma) \subset R(p \leq 0)
\end{equation}
there are $m > 0$ derivations ending in anwer constraints
$\Gamma_1,\ldots,\Gamma_m$ such that
${\cal T}_2 \models R(\Gamma)
  \subset R(\Gamma_1) \cup \cdots \cup R(\Gamma_m)
$.
We cannot replace Equation~(\ref{eq:ineqCond})
by ${\cal T}_2 \models R(p \leq 0) \neq \emptyset$.
This would be reducible to the problem of deciding equality between two reals,
a problem shown to be unsolvable \cite{aberth98}.

\subsection{Semantics of equation solving}

A well-known numerical problem
that can present computational difficulties
is the one of determining $R(p = 0)$.

Theorem~\ref{thm:soundNCSP} shows
that the active constraints in the answer constraint
for this problem \clpncsp\ contain all zeroes of the polynomial.
It also shows that in case of finite failure the polynomial has no zeroes.
The possibility remains that finite failure does not occur,
yet there are no zeroes.
This is unavoidable.
The problem of deciding whether
${\cal T}_2 \models R(p = 0) = \emptyset$
reduces again to the problem of deciding equality between two reals.
The best we can hope for is attained here:
showing emptiness or finding small intervals
in which all solutions, if any, are contained.

Completeness (Theorem~\ref{thm:completeness})
has nothing to say about this problem:
it is rare for a polynomial $p$ to make
${\cal T}_2 \models R(\Gamma) \subset R(p = 0)$
true for non-empty $R(\Gamma)$.
With respect to ${\cal T}_2$ the set $R(p = 0)$
is a finite set of reals,
and it is rare for these to be a floating-point number.
For most $p$, the least $R(\Gamma)$ containing containing
any root of it is an interval of positive width. 

Conventional numerical computation produces single floating-point
numbers that are intended to be near a solution,
and mostly are.
Sometimes they are not, and one cannot tell from the program's output.
Interval arithmetic and numerical \csps\ improve on this by
returning intervals that contain the solutions, if any,
and by failing to return any intervals in which it is certain that
no solutions exist.
\clpncsp\ improves on this by giving a logic semantics,
of which Theorem~\ref{thm:soundNCSP} is an example.
However, interval arithmetic and interval constraints are limited
in that they only solve a single \csp.
A more important advantage of \clpncsp\ is that,
in addition to solving \csps,
it automates the generation of the multiple \csps\
that are often required in scheduling and in engineering design.
We close by giving an example of this mode of operation.

\subsection{A toy example in CLP/NCSP}
Consider an electrical network
in which resistors are connected to each other.
The network as a whole has a certain resistance.
We have available twelve resistors;
three each of 100, 150, 250, and 500 ohms.
From this inventory we are to build a network
that has a specified resistance
so that it can function as part in a larger apparatus.
Fortunately there is a certain latitude:
the resistance of the resulting network
has to lie between 115 and 120 ohms. 
\emph{The structure of the network is not given.}
This is a design problem in addition to
being a computational problem.

Even with the dozen components given in this problem
there is a large number of ways in which they can be connected.
We can nest parallel networks inside a series network,
or the other way around,
to several levels deep.
Evaluation of each such combination
requires a non-negligible amount of computation
involving real-valued variables.
The search space is sizable,
hence the importance of constraint propagation
to eliminate most of it.

Let us imagine for \clpncsp\ a Prolog-like syntax.
Please do not be misled by the {\tt type writer-like}
font into believing in an implementation: none exists.
The figures given in the example are for illustration only
and are chosen to be merely plausible.

According to \clp\ bodies of clauses contain
both constraint atoms and program atoms.
We separate them with a semicolon:
the constraints, if any, come first.
Instead of writing $d_{a,b}(x)$ for the domain constraints,
we write for ease of typing
{\tt <a|X|b>} in the style of Dirac's bra and ket notation.
When \verb+a+ is infinite,
we write \verb+-inf+; this is a single mnemonic identifier,
denoting that particular floating-point value.
Similarly for \verb+b+ and 
\verb|inf| or \verb|+inf|.
We omit constraints like
{\tt <-inf|X|+inf>}, which do not constrain their argument.

The predicate \verb+netw(A,N,B,R,PL)+ asserts that network
represented by
\verb+N+
connects terminals
\verb+A+
and
\verb+B+,
has resistance
\verb+R+,
and has parts list
\verb+PL+.
The term \verb+N+ can be
\verb+at(X)+
for an atomic network, which is in this case a single resistor;
it can be
\verb+ser(N1,N2)+,
for two networks in series, or 
\verb+par(N1,N2)+,
for two networks in parallel.

\begin{center}
\begin{verbatim}
1: netw(A,at(R),B,R,(r150:1).nil)
   :- <149.9|R|150.1>;.
% Similarly for 100, 250, and 500 ohms.
2: netw(A,ser(N1,N2),C,R,PL)
   :- sum(R1,R2,R);
      netw(A,N1,B,R1,PL1), netw(B,N2,C,R2,PL2),
      merge(PL1,PL2,PL).
3: netw(A,par(N1,N2),B,R,PL)
   :- inv(R,RR),inv(R1,RR1),inv(R2,RR2), sum(RR1,RR2,RR);
      netw(A,N1,B,R1,PL1), netw(A,N2,B,R2,PL2),
      merge(PL1,PL2,PL).
\end{verbatim}
\end{center}

Clause 1 says that the network can be atomic,
consisting of a single resistor with resistance \verb+R+,
represented by the term \verb+at(R)+.
Its parts list is a list consisting of a single item
\verb+r150:1+,
being a resistor of nominal value 150 ohms in quantity 
\verb+1+.
The condition of clause 1
states that the actual resistance
\verb+R+,
a real variable, belongs to the interval
\verb+[149.9, 150.1]+,
which expresses the tolerance.
There are similar clauses to represent
the other sizes of resistor that are available.

Clause 2 says the network can be
\verb+ser(N1,N2)+,
the series composition of two networks
\verb+N1+
and
\verb+N2+
of unspecified structure, with resistance
\verb+R+,
that satisfies the constraint for the resistance
of a serial composition of networks:
\verb"sum(R1,R2,R)",
which means that $R_1 + R_2 = R$.

In clause 3 the constraint means
$1/R_1 + 1/R_2 = 1/R$,
which is the constraint for resistances $R_1$ and $R_2$
in parallel giving resistance $R$.

The predicate 
\verb+merge+
is left as a black box. Suffice it to know that
the goal \verb+merge(PL1,PL2,PL)+
merges parts lists
\verb+PL1+
and
\verb+PL2+,
which satisfy the inventory restrictions,
into parts list \verb+PL+
unless the latter does not satisfy the inventory restrictions,
in which case the goal fails.

The query
\begin{verbatim}
:- <149.9|R150|150.1>, ...;
   netw(a, par(at(R150), ser(at(R500), par(at(R100), at(R250)))), b, R, PL).
\end{verbatim}
succeeds without search to an answer
that could include something like \verb+<117.1|R|119.3>+.
The program looks like it has been written
with such queries in mind.
However, as explained below, it also succeeds,
though with some search,
to answer
\begin{verbatim}
:- <115.0|R|120.0>; netw(A,N,B,R,PL).
\end{verbatim}
with
\begin{verbatim}
N = par(at(R150), ser(at(R500), par(at(R100), at(R250))))
\end{verbatim}
and\\ \verb+<117.1|R|119.3>, <149.9|R150|150.1>, ...+.

In response to the latter query \pr\
has \emph{synthesized} a suitable network,
thereby solving the design problem.
It traversed a search space
consisting of multiple \csps\
that was generated by \clp\ derivations.
Many of these derivations were cut short
by failing \csps.

As in the first two examples,
the soundness of Theorem~\ref{thm:soundness}
guarantees for this problem that all networks
that are found have a resistance contained in the required interval.
We noticed that in the case of polynomial roots completeness
has no interesting consequence.
This was true because the problem had the form
of a single constraint with equality.
In the design of a resistor network
there is a goal statement with a program atom.
It gives rise to mutiple derivations.
Completeness (theorem~\ref{thm:completeness})
implies that in case a solution exists,
derivation are generated to cover the given interval
for the network's resistance: all solutions are found.

\section{Concluding remarks}
\pr\ only incorporates numerical \csps\ into the \clp\ scheme.
Other types of \csp\, such as those dealing with finite domains,
can be incorporated in the same way.
In fact, the \clp\ scheme is not restricted to including
special-purpose computation into the logic framework:
it has remedies for those difficulties that prevented Prolog
from being a logic programming language.

Pure Prolog held out the promise of a programming language
with logical-implication semantics.
The impracticality of the occurs check in unification
and of symbolic implementation of numerical computation 
caused standard Prolog to compromise semantics.
In this paper we described a method for including
the power of hardware floating-point arithmetic
without semantical compromise.
We should not lose sight of the fact that the \clp\ scheme
also has a remedy for the other
blemish of standard Prolog: compromised unification.
As Clark \cite{clrk91} showed, the Herbrand Equality Theory,
which requires the occurs check,
is only one possible unification theory for the \clp\ scheme.
It can be replaced by Colmerauer's Rational Tree Equality
Theory, so that we have the prospect of a fully practical
programming language with logic-implication semantics.


\end{document}